\documentclass[a4paper]{llncs}

\usepackage[usenames]{color}
\usepackage{fit_tr}
\usepackage{hyperref}
\usepackage{latexsym}
\usepackage{amsmath}
\usepackage[linesnumbered,vlined,ruled]{algorithm2e}
\usepackage{amssymb}
\usepackage{tikz}
\usepackage{mathdots}
\usepackage{paralist}
\usetikzlibrary{arrows,automata}
\usepackage{multirow}
\usepackage{wrapfig}


\overfullrule=5pt


\newcommand{\A}{\mathcal{A}}
\newcommand{\C}{\mathcal{C}}

\newcommand{\mcS}{\mathcal{S}}

\newcommand{\R}{\mathcal{R}}

\newcommand{\X}{\mathcal{X}}
\newcommand{\Y}{\mathcal{Y}}

\newcommand{\lang}{\mathcal{L}}
\newcommand{\langof}[1]{\lang(#1)}

\newcommand{\bbD}{\mathbb{D}}

\newcommand{\bbR}{\mathbb{R}}
\newcommand{\bbS}{\mathbb{S}}

\newcommand{\bbX}{\mathbb{X}}
\newcommand{\bbY}{\mathbb{Y}}

\newcommand{\nat}{\mathbb{N}}

\newcommand{\ltr}[1]{\xrightarrow{#1}} 

\newcommand{\run}{\Longrightarrow}
\newcommand{\lrun}[1]{\stackrel{#1}{\run}} 

\newcommand{\powerset}[1]{2^{#1}}
\newcommand{\st}{\;|\;}

\newcommand{\sing}{\mathrm{Sing}}
\newcommand{\singof}[1]{\sing(#1)}

\newcommand{\td}[1]{\textcolor{blue}{\ifmmode \text{[TODO: #1]}\else [TODO: #1] \fi}}

\newcommand{\projsymbol}[0]{\pi}
\newcommand{\projOfFrom}[2]{\projsymbol_{[#1]}(#2)}
\newcommand{\projNthOf}[2]{\projsymbol_{#1}(#2)}
\newcommand{\projInvOverOf}[2]{\projsymbol^{-1}_{#1}(#2)}

\newcommand{\zerosymb}[0]{\overline{0}}

\newcommand{\gfp}[0]{\nu}
\newcommand{\lfp}[0]{\mu}

\newcommand{\post}[0]{\mathit{post}}
\newcommand{\postof}[2]{\post{\scriptstyle[#1]}(#2)}
\newcommand{\pre}[0]{\mathit{pre}}
\newcommand{\preofred}[1]{\pre{\scriptstyle[#1]}}
\newcommand{\preof}[2]{\preofred{#1}(#2)}

\newcommand{\cpre}[0]{\mathit{cpre}}
\newcommand{\cpreofred}[1]{\cpre{\scriptstyle[#1]}}
\newcommand{\cpreof}[2]{\cpreofred{#1}(#2)}

\newcommand{\finst}[1]{F_{#1}}
\newcommand{\finstex}[1]{F^{\sharp}_{#1}}
\newcommand{\fcfinstex}[1]{f_{\finstex{#1}}}

\newcommand{\nfinst}[1]{N_{#1}}
\newcommand{\nfinstex}[1]{N^{\sharp}_{#1}}
\newcommand{\fcnfinstex}[1]{f_{\nfinstex{#1}}}

\newcommand{\downcl}[0]{\downarrow\!\!}
\newcommand{\downclgen}[1]{{\downcl\{#1\}}}
\newcommand{\downclngen}[1]{{\downcl\,#1}}
\newcommand{\upcl}[0]{\uparrow\!\!}

\newcommand{\upclngen}[1]{{\upcl\,#1}}


\newcommand{\relmiddle}[1]{\mathrel{}\middle#1\mathrel{}}
\newcommand{\rlmid}{\relmiddle{|}}

\newcommand{\choice}[1]{{\textstyle\coprod}#1}


\newcommand{\unitrack}[2]{{\scriptsize $\begin{array}{rl} #1:&#2\end{array}$}}
\newcommand{\bintrack}[4]{{\scriptsize $\begin{array}{rl} #1:&#3\\ #2:&#4\end{array}$}}

\newcommand{\termof}[1]{{t\scriptstyle{[#1]}}}


\pagestyle{plain}


\title{Nested Antichains for WS1S}
\author{
  Tom\'{a}\v{s} Fiedor, 
  Luk\'{a}\v{s} Hol\'{i}k,
  Ond\v{r}ej Leng\'{a}l, and
  Tom\'{a}\v{s} Vojnar
}

\institute{
  {FIT, Brno University of Technology, IT4Innovations Centre of Excellence, Czech~Republic}
}

\begin{document} 

\booktitle{Nested Antichains for WS1S} 
{} 
{FIT BUT Technical Report Series}
{Tom\'{a}\v{s} Fiedor, Luk\'{a}\v{s} Hol\'{\i}k,\\[2mm]
Ond\v{r}ej Leng\'{a}l, and Tom\'{a}\v{s}~Vojnar} 
{Technical Report No. FIT-TR-2014-06\\[2mm]
 Faculty of Information Technology, Brno University of Technology}
{Last modified: \today}

\eject

\pagestyle{empty}
\noindent\textbf{NOTE:} This technical report contains an extended version of
a~paper with the same name accepted to TACAS'15.

\eject

\addtocounter{page}{-2}


\pagestyle{plain}


\maketitle

\begin{abstract}
We propose a novel approach for coping with alternating quantification as the
main source of nonelementary complexity of deciding WS1S formulae.
Our approach is applicable within the state-of-the-art automata-based WS1S
decision procedure implemented, e.g.\ in MONA.
The way in which the standard decision procedure processes quantifiers involves
determinization, with its worst case exponential complexity, for every
quantifier alternation in the prefix of a~formula.
Our algorithm avoids building the deterministic automata---instead, it
constructs only those of their states needed for (dis)proving validity of the
formula.
It uses a symbolic representation of the states, which have a~deeply nested
structure stemming from the repeated implicit subset construction, and prunes
the search space by a nested subsumption relation, a generalization of the one
used by the so-called antichain algorithms for handling nondeterministic
automata.
We have obtained encouraging experimental results, in some cases outperforming
MONA by several orders of magnitude.
\end{abstract}

\vspace{-0.0mm}
\section{Introduction}\label{sec:intro}
\vspace{-0.0mm}

Weak monadic second-order logic of one successor (WS1S) 
is a powerful, concise, and decidable logic for describing regular properties of finite words.
Despite its nonelementary worst case complexity~\cite{meyer-lc-72},
it has been shown useful in numerous applications. 
Most of the successful applications were due to the tool MONA \cite{monapaper},
which implements a finite automata-based decision procedure for WS1S and WS2S
(a generalization of WS1S to finite binary trees).
The authors of MONA list a multitude of its diverse applications~\cite{monamanual},
ranging from software and hardware verification through controller synthesis to computational linguistics, and further on.
Among more recent applications, verification of pointer programs and deciding related logics~\cite{strand1,strand2,adam,hip/sleek,jahob} can be mentioned, as well as synthesis from regular specifications~\cite{regsy}.
MONA is still the standard tool and the most common choice when it comes to deciding WS1S/WS2S. 
There are other related automata-based tools that are more recent, 
such as jMosel~\cite{jmosel} for a logic M2L(Str),
and other than automata-based approaches, such as~\cite{ganzow:new}.
They implement optimizations that allow to outperform MONA on some benchmarks, however, none provides an evidence of being consistently more efficient. 
Despite many optimizations implemented in MONA and the other tools, the worst case complexity of the problem sometimes strikes back. 
%
%
Authors of methods using the translation of their problem to WS1S/WS2S 
are then forced to either find workarounds to circumvent the complexity blowup, such as in~\cite{strand2},
or, often restricting the input of their approach,
give up translating to WS1S/WS2S altogether~\cite{kuncak:trex}.

The decision procedure of MONA works with deterministic automata;
it uses determinization extensively and relies on minimization of deterministic automata to suppress the complexity blow-up.
However, the worst case exponential complexity of determinization often significantly harms the performance of the tool.
Recent works on efficient methods for handling nondeterministic automata suggest a way of alleviating this problem,
in particular works on efficient testing of language inclusion and universality of finite automata \cite{doyen:antichain,wulf:antichains,abdulla-tacas-10} and size reduction \cite{bustan:simulation,abdulla:computing} based on a~simulation relation.
Handling nondeterministic automata using these methods, while avoiding determinization, has been shown to provide great efficiency improvements in 
\cite{bouajjani:antichain} (abstract regular model checking) and also \cite{habermehl:forest} (shape analysis).
In this paper, we make a major step towards building the entire decision procedure of WS1S on nondeterministic automata using similar techniques.
We propose a generalization of the antichain algorithms of~\cite{doyen:antichain}
that addresses the main bottleneck of the automata-based decision procedure for WS1S, 
which is also the source of its nonelementary complexity: 
elimination of alternating quantifiers on the automata level.


More concretely, the automata-based decision procedure translates the input WS1S formula into a finite word automaton such that its language represents exactly all models of the formula. The automaton is built in a bottom-up manner according to the structure of the formula, starting with predefined atomic automata for literals and applying a~corresponding automata operation for every logical connective and quantifier ($\land,\lor,\neg,\exists$).
The cause of the nonelementary complexity of the procedure can be explained on an example formula of the form
$\varphi' = \exists X_{m}\forall X_{m-1} \ldots \forall X_{2} \exists X_1: \varphi_0$.
The universal quantifiers are first replaced by negation and existential quantification, 
which results in
$\varphi = \exists X_{m}\neg \exists X_{m-1} \ldots \neg \exists X_{2} \neg\exists X_1: \varphi_0$.  
The algorithm then builds a sequence of automata for the sub-formulae $\varphi_0,\varphi_0^\sharp,\ldots,\varphi_{m-1},\varphi_{m-1}^\sharp$ of $\varphi$ where for $0\leq  i < m$,
$\varphi_i^{\sharp} = \exists X_{i+1}:\varphi_i$, and $\varphi_{i+1} = \neg\varphi_i^{\sharp}$. 
Every automaton in the sequence is created from the previous one by applying the automata operations corresponding to negation or elimination of the existential quantifier,
the latter of which may introduce nondeterminism.
Negation applied on a nondeterministic automaton may then yield an exponential blowup: 
given an automaton for $\psi$, the automaton for $\neg\psi$ is constructed by the classical automata-theoretic construction consisting of determinization by the subset construction followed by swapping of the sets of final and non-final states. 
The subset construction is exponential in the worst case.
The worst case complexity of the procedure run on $\varphi$ is then a tower of exponentials with one level for every quantifier alternation in $\varphi$;
note that we cannot do much better---this non-elementary complexity is an
inherent property of the problem.

Our new algorithm for processing alternating
quantifiers in the prefix of a~formula avoids the explicit determinization
of automata in the classical procedure and significantly reduces the state space explosion associated with it.
It is based on a generalization of the antichain principle used for deciding universality and language inclusion of finite automata \cite{wulf:antichains,abdulla-tacas-10}.
It generalizes the antichain algorithms so that instead of being used to process only one level of the chain of automata, 
it processes the whole chain of quantifications with $i$ alternations on-the-fly.
This leads to working with automata states that are sets of sets of sets \dots{} of states of the automaton representing $\varphi_0$ of the nesting depth $i$ (this corresponds to $i$ levels of subset construction being done on-the-fly).
The algorithm uses nested symbolic terms to represent sets of such automata states and a generalized version of antichain subsumption pruning which descends recursively down the structure of the terms while pruning on all its levels.

Our nested antichain algorithm can be in its current form used only to process a~quantifier prefix of a formula, 
after which we return the answer to the validity query, 
but not an automaton representing all models of the input formula. 
That is, we cannot use the optimized algorithm for processing inner negations and alternating quantifiers which are not a part of the quantifier prefix. 
However, despite this and the fact that our implementation is far less mature than that of MONA, 
our experimental results still show significant improvements over its performance, 
especially in terms of generated state space. 
We consider this a~strong indication that using techniques for nondeterministic automata to decide WS1S (and WS$k$S) is highly promising.
There are many more opportunities of improving the decision procedure based on nondeterministic automata, by using
techniques 
such as simulation relations or bisimulation up-to congruence \cite{bonchi:congruence}, and applying them to process not only the quantifier prefix, but all logical connectives of a formula. We consider this paper to be the first step towards a~decision procedure for WS1S/WS$k$S with an entirely different scalability than the current state-of-the-art. 

\paragraph{Plan of the paper.}
We define the logic WS1S in Section~\ref{sec:wsks}.
In Sections~\ref{sec:fa} and \ref{sec:dec_proc},
we introduce finite word automata and describe the classical decision procedure for WS1S based on finite word automata.
In Section~\ref{sec:dec_proc_ws1s}, we introduce our method for dealing with alternating quantifiers.
Finally, we give an experimental evaluation and conclude the paper in Sections~\ref{sec:experiments} and \ref{sec:conclusion}.



\vspace{-0.0mm}
\section{WS1S}\label{sec:wsks}
\vspace{-0.0mm}

In this section we introduce the \emph{weak monadic second-order logic of one
successor} (WS1S).
We introduce only its minimal syntax here, 
for the full standard syntax and a~more thorough introduction,
see Section~3.3 in~\cite{tata}.

WS1S is a monadic second-order logic over the universe of discourse $\nat_0$.
This means that the logic allows second-order
\emph{variables}, usually denoted using upper-case letters $X, Y, \dots$, that range
over finite subsets of $\nat_0$, e.g.\ $X = \{0, 3, 42\}$.
Atomic formulae are of the form
\begin{inparaenum}[(i)]
\item  $X \subseteq Y$,
\item  $\singof{X}$,
\item  $X = \{0\}$, and
\item  $X = Y + 1$,
\end{inparaenum}
where $X$ and $Y$ are variables.
The atomic formulae are interpreted in turn as
\begin{inparaenum}[(i)]
\item  standard set inclusion,
\item  the singleton predicate,
\item  $X$ is a singleton containing 0, and
\item  $X = \{x\}$ and $Y = \{y\}$ are singletons and $x$ is the successor of $y$, i.e.\ $x = y + 1$.
\end{inparaenum}
Formulae are built from the atomic formulae using the logical connectives
$\wedge, \vee, \neg$, and the quantifier $\exists X$ (for a second-order variable
$X$).


Given a WS1S formula $\varphi(X_1, \dots, X_n)$ with free
variables $X_1, \dots, X_n$, the assignment $\rho = \{X_1 \mapsto S_1, \dots,
X_n \mapsto S_n\}$, where $S_1, \dots, S_n$ are finite subsets of
$\nat_0$, \emph{satisfies} $\varphi$, written as $\rho \models \varphi$, if the
formula holds when every variable $X_i$ is replaced with its corresponding
value $S_i = \rho(X_i)$.
We say that $\varphi$ is \emph{valid}, denoted as $\models \varphi$, if it is satisfied by
all assignments of its free variables to finite subsets of $\nat_0$.
Observe the limitation to \emph{finite} subsets of $\nat_0$ (related to the
adjective \emph{weak} in the name of the logic); a WS1S formula can indeed
only have finite models (although there may be infinitely many of them).

\vspace{-0.0mm}
\section{Preliminaries and Finite Automata}\label{sec:fa}
\vspace{-0.0mm}

For a set $D$ and a set $\bbS \subseteq \powerset{D}$ we use $\downclngen{\bbS}$ to denote the
\emph{downward closure} of $\bbS$, i.e.\ the set $\downclngen{\bbS} = \{R \subseteq D \mid \exists S \in \bbS : R \subseteq
S\}$, and $\upclngen{\bbS}$ to denote the \emph{upward closure} of $\bbS$, i.e.\ the set $\upclngen{\bbS} =
\{R \subseteq D \mid \exists S \in \bbS : R \supseteq S\}$.
The set $\bbS$ is in both cases called the set of \emph{generators} of $\upclngen{\bbS}$ or $\downclngen{\bbS}$ respectively.
A set $\bbS$ is \emph{downward closed} if it equals its downward closure, $\bbS =
\downclngen{\bbS}$, and \emph{upward closed} if it equals to its upward closure, $\bbS
= \upclngen{\bbS}$.
The \emph{choice} operator $\choice$ (sometimes also called the unordered Cartesian product) is an operator that, given a set of sets
$\bbD = \{D_1, \dots, D_n\}$, returns the set of all sets $\{d_1, \dots, d_n\}$
obtained by taking one element $d_i$ from every set $D_i$.
Formally,
\begin{equation}
\choice{\bbD} = \bigl\{\{d_1, \dots, d_n\} \mid (d_1, \dots, d_n) \in \prod_{i=1}^n D_i\bigr\}
\end{equation}
where $\prod$ denotes the Cartesian product.
Note that for a set $D$, $\choice{\{D\}}$ is the set of all
singleton subsets of $D$, i.e.\ $\choice{\{D\}} = \{\{d\} \mid d \in D\}$.
Further note that if any $D_i$ is the empty set $\emptyset$, the result is
$\choice{\bbD} = \emptyset$.

Let $\bbX$ be a set of variables.
A \emph{symbol} $\tau$ over $\bbX$ is a mapping of all variables in $\bbX$ to either 0
or 1, e.g.\ $\tau = \{X_1 \mapsto 0, X_2 \mapsto 1 \}$ for $\bbX = \{X_1, X_2\}$.
An \emph{alphabet} over $\bbX$ is the set of all symbols over $\bbX$, denoted
as $\Sigma_\bbX$.
For any $\bbX$ (even empty), we use $\zerosymb$ to denote the symbol which maps all
variables from $\bbX$ to 0, $\zerosymb \in \Sigma_{\bbX}$.

A (nondeterministic) \emph{finite} (word) \emph{automaton} (abbreviated
as FA in the following) over a~set of variables $\bbX$ is a~quadruple
$\A = (Q, \Delta, I, \finst{})$ where $Q$ is a~finite set of states, $I \subseteq Q$
is a~set of \emph{initial} states, $\finst{} \subseteq Q$ is a~set of \emph{final} states, and
$\Delta$ is a~set of transitions
of the form $(p, \tau, q)$ where $p, q \in Q$ and
$\tau \in \Sigma_\bbX$.
We use $p \ltr{\tau} q \in \Delta$ to denote that $(p, \tau, q) \in \Delta$.
Note that for an FA $\A$ over $\bbX = \emptyset$, $\A$ is a unary FA with the
alphabet $\Sigma_{\bbX} = \{\zerosymb\}$.

A \emph{run} $r$ of $\A$ over a word $w = \tau_1 \tau_2 \dots \tau_n \in \Sigma_\bbX^{*}$ from the
state $p \in Q$ to the state $s \in Q$ is a sequence of states $r = q_0 q_1 \dots
q_n \in Q^{+}$ such that $q_0 = p$, $q_n = s$ and for all $1 \leq i \leq n$ there
is a transition $q_{i-1} \ltr{\tau_i} q_i$ in $\Delta$.
If $s \in \finst{}$, we say that $r$ is an \emph{accepting run}.
We write $p \lrun{w} s$ to denote that there exists a~run from the state $p$ to
the state $s$ over the word $w$.
The \emph{language} accepted by a~state $q$ is defined by $\lang_{\A}(q)
= \{w\st q \lrun{w} q_f, q_f \in \finst{}\}$, while the language of a~set of states $S
\subseteq Q$ is defined as $\lang_{\A}(S) = \bigcup_{q \in S} \lang_{\A}(q)$.
When it is clear which FA $\A$ we refer to, we only write $\lang(q)$ or
$\lang(S)$.
The language of $\A$ is defined as $\lang(\A) = \lang_{\A}(I)$.
We say that the state $q$ accepts $w$ and that the automaton $\A$ accepts $w$ to
express that $w \in \lang_{\A}(q)$ and $w\in \lang(\A)$ respectively.
We call a~language $L \subseteq \Sigma_\bbX^{*}$ \emph{universal} iff $L = \Sigma_\bbX^{*}$.

For a set of states $S \subseteq Q$, we define
\begin{align*}
\postof{\Delta, \tau}{S} & {}= \bigcup_{s \in S} \{t \mid s \ltr{\tau} t \in \Delta\}, \\
\preof{\Delta, \tau}{S}  & {}= \bigcup_{s \in S} \{t \mid t \ltr{\tau} s \in \Delta\},~\text{and} \\
\cpreof{\Delta, \tau}{S} & {}= \{t \mid \postof{\Delta, \tau}{\{t\}}\subseteq S\}.
\end{align*}




The \emph{complement} of $\A$ is the automaton $\A_{\C} =
(\powerset{Q}, \Delta_{\C}, \{I\}, \downclgen{Q\setminus F})$ where
$\Delta_{\C} = \left\{P \ltr{\tau} \postof{\Delta, \tau}{P} \rlmid P \subseteq Q\right\}$;
this corresponds to the standard procedure that
first determinizes $\A$ by the subset construction and then swaps its sets of final and non-final states,
and $\downclgen{Q\setminus F}$ is the set of all subsets of $Q$ that do not contain a final state of $\A$. 
The language of $\A_\C$ is the complement of the language of $\A$,
i.e.\ $\langof{\A_{\C}} = \overline{\langof{\A}}$.


For a set of variables $\bbX$ and a variable $X$, the \emph{projection} of $X$
from $\bbX$, denoted as $\projOfFrom{X}{\bbX}$, is the set $\bbX \setminus
\{X\}$.
For a symbol $\tau$, the projection of $X$
from $\tau$, denoted $\projOfFrom{X}{\tau}$, is obtained from $\tau$ by
restricting $\tau$ to the domain $\projOfFrom{X}{\bbX}$.
For a transition relation $\Delta$, the projection of $X$ from $\Delta$,
denoted as $\projOfFrom{X}{\Delta}$, is the transition relation 
$\Big\{p \ltr{\projOfFrom{X}{\tau}} q \mid p \ltr{\tau} q \in \Delta\Big\}$.
%

\vspace{-0.0mm}
\section{Deciding WS1S with Finite Automata}\label{sec:dec_proc}
\vspace{-0.0mm}

The classical decision procedure for WS1S~\cite{buchi59} (as described in Section~3.3
of~\cite{tata}) is based on a~logic-automata
connection and decides validity (satisfiability) of a WS1S formula
$\varphi(X_1, \dots, X_n)$ by constructing the FA $\A_\varphi$ over $\{X_1, \dots, X_n\}$
which recognizes encodings of exactly the models of $\varphi$.
The automaton is built in a bottom-up manner, according to the structure of $\varphi$, 
starting with predefined atomic automata for literals and applying a corresponding automata operation for every logical connective and quantifier ($\land,\lor,\neg,\exists$).
Hence, for every sub-formula $\psi$ of $\varphi$, the procedure will compute the automaton $\A_\psi$ such that $\langof{\A_\psi}$ represents exactly all models of $\psi$, terminating with the result $\A_\varphi$.

The alphabet of $\A_\varphi$ consists of all symbols over the set $\bbX =
\{X_1, \dots, X_n\}$ of free variables of $\varphi$
(for $a,b \in \{0,1\}$ and $\bbX = \{X_1, X_2\}$, we use
\bintrack{X_1}{X_2}{a}{b} to denote the symbol $\{X_1 \mapsto a, X_2 \mapsto b\}$).
A~word $w$ from the language of $\A_\varphi$ is a sequence of these symbols,
e.g.\ \bintrack{X_1}{X_2}{\epsilon}{\epsilon},
\bintrack{X_1}{X_2}{011}{101}, or \bintrack{X_1}{X_2}{01100}{10100}.
We denote the $i$-th symbol of $w$ as $w[i]$, for $i \in \nat_0$.
An assignment $\rho:\bbX \to \powerset{\nat_0}$ mapping free variables $\bbX$ of $\varphi$ to subsets of $\nat_0$ is encoded into a
word $w_{\rho}$ of symbols over $\bbX$ in the following way:
$w_{\rho}$ contains $1$ in the $j$-th position of the row for $X_i$ iff $j \in X_i$ in $\rho$.
Formally, for every $i \in \nat_0$ and $X_j \in \bbX$, if $i \in \rho(X_j)$, then $w_{\rho}[i]$ maps $X_j \mapsto 1$.
On the other hand, if $i \not\in \rho(X_j)$, then either $w_{\rho}[i]$ maps $X_j
\mapsto 0$, or the length of $w$ is smaller than or equal to $i$.
Notice that there exist an infinite number of encodings of $\rho$.
The shortest one is $w_{\rho}^s$ of the length $n+1$, where $n$ is the largest number appearing in any of the sets that is assigned to
a~variable of $\bbX$ in $\rho$, or $-1$ when all these sets are empty.
The rest of the encodings are all those corresponding to $w_{\rho}^s$ extended with an arbitrary number of
$\zerosymb$ symbols appended to its end.
For example,
\bintrack{X_1}{X_2}{0}{1},
\bintrack{X_1}{X_2}{00}{10},
\bintrack{X_1}{X_2}{000}{100},
\bintrack{X_1}{X_2}{000\dots 0}{100\dots 0}
are all encodings of the assignment $\rho = \left\{X_1 \mapsto \emptyset, X_2 \mapsto
\{0\}\right\}$.
For the soundness of the decision procedure, it is important that $\A_\varphi$
always accepts either all encodings of $\rho$ or none of them.



%

The automata $\A_{\varphi \land \psi}$ and $\A_{\varphi \lor \psi}$ are constructed from $\A_\varphi$ and $\A_\psi$ by standard automata-theoretic union and intersection operations, preceded by the so-called cylindrification which unifies the alphabets of $\A_\varphi$ and $\A_\psi$.
Since these operations, as well as the automata for the atomic formulae, are not the subject of the contribution 
proposed in this paper, we refer the interested reader to~\cite{tata} for details.


The part of the procedure which is central for this paper is processing negation and existential quantification;
we will therefore describe it in detail.
The FA $\A_{\neg\varphi}$ is constructed as the complement of 
$\A_\varphi$.
Then, all encodings of the assignments that were accepted by $\A_\varphi$ are
rejected by $\A_{\neg\varphi}$ and vice versa.
The FA
$\A_{\exists X : \varphi}$ is obtained from the FA $\A_\varphi = (Q, \Delta, I,
\finst{})$ by first projecting
$X$ from the transition relation $\Delta$, yielding the FA
$\A_\varphi' = (Q, \projOfFrom{X}{\Delta}, I, \finst{})$.
However, $\A_\varphi'$ cannot be directly used as $\A_{\exists X : \varphi}$.
The reason is that $\A_\varphi'$ may now be inconsistent in accepting some
encodings of an assignment $\rho$ while rejecting other encodings of $\rho$.
For example, suppose that $\A_\varphi$ accepts the words 
\bintrack{X_1}{X_2}{010}{001},
\bintrack{X_1}{X_2}{0100}{0010},
\bintrack{X_1}{X_2}{0100\dots 0}{0010\dots 0}
and we are computing the FA for $\exists X_2: \varphi$.
When we remove the $X_2$ row from all symbols, we obtain the FA $\A_\varphi'$ that accepts the words
\unitrack{X_1}{010},
\unitrack{X_1}{0100},
\unitrack{X_1}{0100\dots 0},
but does not accept the word 
\unitrack{X_1}{01} that encodes the same assignment (because \bintrack{X_1}{X_2}{01}{??} $\not\in
\langof{A_\varphi}$ for any values in the places of~``?''s).
As a remedy for this situation, we need to modify $\A_\varphi'$ to also accept
the rest of the encodings of $\rho$.
This is done by enlarging the set of final states of $\A_\varphi'$ to also
contain all states that can reach a~final state of $\A_\varphi'$ by a~sequence
of $\zerosymb$ symbols.
%
Formally,
$\A_{\exists X : \varphi} = (Q, \projOfFrom{X}{\Delta}, I, \finstex{})$
is obtained from
$\A_\varphi' = (Q, \projOfFrom{X}{\Delta}, I, \finst{})$
by computing $\finstex{}$ from $\finst{}$ using the fixpoint computation
$
\finstex{} = \lfp Z\,.\, \finst{} \cup \preof{\projOfFrom{X}{\Delta}, \zerosymb}{Z}
$.
Intuitively, 
the least fixpoint denotes the set of states backward-reachable from $\finst{}$ following transitions of $\projOfFrom{X}{\Delta}$ labelled by $\zerosymb$.
%
%
%

The procedure returns an automaton $\A_\varphi$ that accepts exactly all
encodings of the models of $\varphi$.
This means that the language of $\A_\varphi$ is
\begin{inparaenum}[(i)]
\item  universal iff $\varphi$ is valid,
\item  non-universal iff $\varphi$ is invalid,
\item  empty iff $\varphi$ is unsatisfiable, and
\item  non-empty iff $\varphi$ is satisfiable.
\end{inparaenum}
Notice that in the particular case of \emph{ground} formulae (i.e.\ formulae without free
variables), the language of $\A_\varphi$ is either
$\langof{\A_\varphi} = \{\zerosymb\}^{*}$ in the case $\varphi$ is valid, or
$\langof{\A_\varphi} = \emptyset$ in the case $\varphi$ is invalid.

\vspace{0mm}
\section{Nested Antichain-based Approach for Alternating Quantifiers}\label{sec:dec_proc_ws1s}
\vspace{0mm}

We now present our approach for dealing with alternating quantifiers in WS1S
formulae.
We consider a ground formula $\varphi$ of the form
\begin{equation}\label{eq:varphi}
\varphi =
\underset{\varphi_m}{\underbrace{
  \neg\,\exists \X_m\,
  \neg
  \underset{\hspace{-3mm}\iddots}{
    \dots 
      \neg \,\exists \X_2\,
      \underset{\varphi_1}{\underbrace{
        \neg \,\exists \X_1: \varphi_0(\bbX)
      }}
  }
}}
\end{equation}
where each 
$\X_i$ is a set of variables $\{X_a, \dots, X_b\}$,
$\exists \X_i$  is an abbreviation for a~non-empty sequence
$\exists X_a \dots \exists X_b$ of consecutive existential
quantifications, and $\varphi_0$ is an arbitrary formula called the \emph{matrix} of $\varphi$.
Note that the problem of checking validity or satisfiability of a formula with
free variables can be easily reduced to this form.

The classical procedure presented in Section~\ref{sec:dec_proc} 
computes a sequence of automata $\A_{\varphi_0},\A_{\varphi_0^\sharp},
\dots, \A_{\varphi_{m-1}^\sharp},\A_{\varphi_m}$ where 
for all $0 \leq i \leq m-1$, 
$\varphi_{i}^\sharp = \exists \X_{i+1}:\varphi_{i}$ and
$\varphi_{i+1} = \neg \varphi_{i}^\sharp$. 
The $\varphi_i$'s are the subformulae of $\varphi$ shown in Equation~\ref{eq:varphi}.
Since eliminating existential quantification on the automata level introduces nondeterminism (due to the projection on the transition relation),
every $\A_{\varphi_i^{\sharp}}$ may be nondeterministic.
The computation of $\A_{\varphi_{i+1}}$ then involves subset construction and becomes exponential. 
The worst case complexity of eliminating the prefix is therefore the tower of exponentials of the height $m$. 
Even though the construction may be optimized, e.g.\ by
minimizing every $\A_{\varphi_i}$ (which is implemented by MONA), 
the size of the generated automata can quickly become intractable. 

The main idea of our algorithm is inspired by the antichain algorithms~\cite{doyen:antichain} for testing language universality of an automaton $\A$.
In a~nutshell, testing universality of $\A$ is testing whether in the complement $\overline \A$ of $\A$
(which is created by determinization via subset construction, followed by swapping final and non-final states),
an initial state can reach a final state.  
The crucial idea of the antichain algorithms is based on the following:
\begin{inparaenum}[(i)]
\item  The search can be done on-the-fly while constructing $\overline\A$.
\item  The sets of states that arise during the search are closed (upward or downward, depending on the variant of the algorithm).
\item  The computation can be done symbolically on the generators of these closed sets. 
\end{inparaenum}
It is enough to keep only the extreme generators of the closed sets (maximal for downward closed, minimal for upward closed).
The generators that are not extreme (we say that they are \emph{subsumed}) 
can be pruned away, 
which vastly reduces the search space. 

We notice that individual steps of the algorithm for constructing $\A_\varphi$ are very similar to testing universality.
Automaton $\A_{\varphi_{i}}$ arises by subset construction from $\A_{\varphi_{i-1}^\sharp}$,
and to compute $\A_{\varphi_{i}^\sharp}$, 
it is necessary to compute the set of final states $\finstex{i}$. 
Those are states backward reachable from the final states of $\A_{\varphi_{i}}$ via a subset of transitions of $\Delta_i$ 
(those labelled by symbols projected to $\zerosymb$ by $\pi_{i+1})$.
To compute $F_i^\sharp$, the antichain algorithms could be actually taken off-the-shelf
and run with $\A_{\varphi_{i-1}^\sharp}$ in the role of the input $\A$ and $\A_{\varphi_{i}^\sharp}$ in the role of $\overline\A$.
However, this approach has the following two problems.
First, antichain algorithms do not produce the automaton $\overline \A$ (here $\A_{\varphi_{i}^\sharp}$), 
but only a symbolic representation of a set of (backward) reachable states (here of $F_i^\sharp$). 
Since $\A_{\varphi_{i}^\sharp}$ is the input of the construction of $\A_{\varphi_{i+1}}$, 
the construction of $\A_\varphi$ could not continue.
The other problem is that the size of the input $\A_{\varphi_{i-1}^\sharp}$ of the antichain algorithm
is only limited by the tower of exponentials of the height $i-1$, and this might be already far out of reach.

The main contribution of our paper is an algorithm 
that alleviates the two problems mentioned above. 
It is based on a novel way of performing not only one, 
but all the $2m$ steps of the construction of $\A_\varphi$ on-the-fly.
It uses a nested symbolic representation of sets of states and a form of nested subsumption pruning on 
all levels of their structure.
This is achieved by a substantial refinement of the basic ideas of antichain algorithms.


\vspace{-0.0mm}
\subsection{Structure of the Algorithm}\label{sec:structure}
\vspace{-0.0mm}

Let us now start explaining our on-the-fly algorithm for handling quantifier alternation.
Following the construction of automata described in Section~\ref{sec:dec_proc}, 
the structure of the automata from the previous section, $\A_{\varphi_0},\A_{\varphi_0^\sharp},
\dots, \A_{\varphi_{m-1}^\sharp},\A_{\varphi_m}$, can be described using the following recursive definition.
We use $\projNthOf{i}{C}$ for any mathematical structure $C$ to denote
projection of all variables in $\X_1 \cup \cdots \cup \X_i$ from $C$.
%

Let $\A_{\varphi_0} = (Q_{0}, \Delta_{0}, I_{0}, \finst{0})$ be an FA over $\bbX$.
Then, for each $0\leq i < m$, 
$\A_{\varphi_i^\sharp}$ and $\A_{\varphi_{i+1}}$ are FAs over $\projNthOf{i+1}{\bbX}$
that have from the construction the following structure:
\begin{equation*}
\begin{array}{rlc@{~~~}|c@{~~~}rl}
\A_{\varphi_i^\sharp}             &             = (Q_{i}, \Delta_{i}^\sharp, I_{i}, \finst{i}^\sharp)~\mbox{where}      &&& \A_{\varphi_{i+1}}             &             = (Q_{i+1}, \Delta_{i+1}, I_{i+1}, \finst{i+1})~\mbox{where}\\
\scriptstyle\Delta_{i}^\sharp     &\scriptstyle = \projNthOf{i+1}{\Delta_i}~\mbox{and}                                  &&&\scriptstyle \Delta_{i+1}       &\scriptstyle = \left\{R \ltr{\tau} \postof{\Delta_i^\sharp, \tau}{R} \rlmid R\in Q_{i+1}\right\},\\
\scriptstyle\finstex{i}           &\scriptstyle = \lfp Z\,.\, \finst{i} \cup \preof{\Delta_i^\sharp, \zerosymb}{Z}.     &&&\scriptstyle Q_{i+1}            &\scriptstyle = 2^{Q_{i}}, \quad I_{i+1} = \{I_{i}\},\quad\mbox{and}\quad \finst{i+1} = \downclngen{\{Q_i\setminus\finstex{i}\}}.
\end{array}
\end{equation*}
%
%
\noindent
We recall that $\A_{\varphi_i^\sharp}$ directly corresponds to existential quantification of the variable $X_i$ (cf.\ Section~\ref{sec:dec_proc}),
and $\A_{\varphi_{i+1}}$ directly corresponds to the complement of $\A_{\varphi_i^\sharp}$ (cf.\ Section~\ref{sec:fa}).

A crucial observation behind our approach is that,
because $\varphi$ is ground,
$\A_{\varphi}$ is an FA over an empty set of variables, and, therefore,
$\langof{\A_\varphi}$ is either the empty set $\emptyset$
or the set $\{\zerosymb\}^{*}$ (as described in Section~\ref{sec:dec_proc}).
Therefore, we need to distinguish between these two cases only.
To determine which of them holds, we do not need to
explicitly construct the automaton $\A_\varphi$.
Instead, it suffices to check
whether $\A_\varphi$ accepts the empty string $\epsilon$. 
This is equivalent to checking existence of a state that is at the same time final and initial, that is
\begin{equation}\label{eq:test}
\models \varphi \quad \mathrm{iff} \quad I_m \cap \finst{m} \neq \emptyset.
\end{equation}
To compute $I_m$ from $I_0$ is straightforward (it equals $\{\{\ldots\{\{I_0\}\}\ldots\}\}$ nested $m$-times). 
In the rest of the section, we will describe how to compute $\finst{m}$ (its symbolic representation), and how to test whether it intersects with $I_m$.

The algorithm takes advantage of the fact that to represent final states, one can use their complement, 
the set of non-final states.
For $0\leq i \leq m$, we write 
$\nfinst{i}$ and $\nfinst{i}^\sharp$ to denote the sets of non-final states $Q_i\setminus\finst{i}$ of $\A_i$ and $Q_i\setminus \finstex{i}$ of $\A_i^\sharp$ respectively.
The algorithm will then instead of computing the sequence of automata $\A_{\varphi_0}$, $\A_{\varphi_0^\sharp}$,
\dots, $\A_{\varphi_{m-1}^\sharp}$, $\A_{\varphi_m}$ compute the sequence 
$\finst{0}, \finstex{0}, \nfinst{1}, \nfinstex{1}, \ldots$ up to either
$\finst{m}$ (if $m$ is even) or $\nfinst{m}$ (if $m$ is odd),
which suffices for testing the validity of $\varphi$.
The algorithm starts with $\finst{0}$ and uses the following recursive equations:
\newcommand{\Fi}{i}
\newcommand{\Fisharp}{ii}
\newcommand{\Ni}{iii}
\newcommand{\Nisharp}{iv}
\begin{equation}
\begin{array}{rrl@{\ \ \ \ \ }rcl}\label{eq:fin_nfin_cl_sets}
\mbox{(\Fi)}      & \finst{i+1}  &= \downclgen{\nfinstex{i}},            & \mbox{(\Fisharp)} & \finstex{i}  &= \lfp Z\,.\, \finst{i} \cup \preof{\Delta_i^\sharp, \zerosymb}{Z}, \\[2mm]
\mbox{(\Ni)}      & \nfinst{i+1} &= \upclngen{\choice{\{\finstex{i}\}}}, & \mbox{(\Nisharp)} & \nfinstex{i} &= \gfp Z\,.\, \nfinst{i} \cap \cpreof{\Delta_i^\sharp, \zerosymb}{Z}.
\end{array}
\end{equation}
Intuitively, 
Equations~(\Fi) and (\Fisharp) are directly from the definition of $\A_i$ and $\A_i^\sharp$.
Equation~(\Ni) is a dual of Equation~(\Fi): 
$\nfinst{i+1}$ contains all subsets of $Q_i$ that contain at least one state from $\finstex{i}$
(cf.\ the definition of the $\choice{}$ operator).
Finally,
Equation~(\Nisharp) is a dual of Equation~(\Fisharp):
in the $k$-th iteration of the greatest fixpoint computation,
the current set of states $Z$ will contain all states that cannot reach an $\finst{i}$ state over $\zerosymb$ within $k$ steps.
In the next iteration, only those states of $Z$ are kept such that all their $\zerosymb$-successors are in $Z$.
Hence, the new value of $Z$ is the set of states that cannot reach $\finst{i}$ over $\zerosymb$ in $k+1$ steps, and the computation stabilises with the set of states that cannot reach $\finst{i}$ over $\zerosymb$ in any number of steps.

In the next two sections, we will show that both of the above fixpoint computations can be carried out symbolically on representatives of upward/downward closed sets. 
Particularly, in Sections~\ref{sec:cpre_to_pre} and \ref{sec:pre_to_cpre}, 
we show how the fixpoints from Equations~(\Fisharp{}) and (\Nisharp) can be computed symbolically, 
using subsets of $Q_{i-1}$ as representatives (generators) of upward/downward closed subsets of $Q_i$. 
Section~\ref{sec:terms} explains how the above symbolic fixpoint computations can be carried out using nested terms of depth $i$ as a~symbolic representation of computed states of $Q_i$. 
Section~\ref{sec:testing} shows how to test emptiness of $I_m \cap \finst{m}$ on the symbolic terms,
and Section~\ref{sec:subsumption} describes the subsumption relation used to minimize the symbolic term representation used within computations of Equations~(\Fisharp{}) and (\Nisharp{}).
Proofs of the lemmas and used equations can be found in Appendix~\ref{app:proofs}.


\vspace{-0.0mm}
\subsection{Computing $\nfinstex{i}$ on Representatives of $\upclngen{\choice \R}$-sets}
\label{sec:cpre_to_pre} 
\vspace{-0.0mm}

Computing $\nfinstex{i}$ at each odd level of the hierarchy of automata is done
by computing the greatest fixpoint of the function from Equation~\ref{eq:fin_nfin_cl_sets}(\Nisharp):
\begin{equation}
\fcnfinstex{i}(Z) = \nfinst{i} \cap \cpreof{\Delta_i^\sharp, \zerosymb}{Z}.
\end{equation}
We will show that the whole fixpoint computation from Equation~\ref{eq:fin_nfin_cl_sets}(\Nisharp) can be carried out symbolically
on the representatives of $Z$.
We will explain that:
\begin{inparaenum}[(a)]
\item  All intermediate values of $Z$ have the form $\upclngen{\choice \R}$, $\R\subseteq Q_i$, so the sets $\R$ can be used as their symbolic representatives.
\item  $\cpre$ and $\cap$ can be computed on such symbolic representation efficiently.
\end{inparaenum}

Let us start with the computation of $\cpreof{\Delta_i^\sharp, \tau}{Z}$ where $\tau\in \projNthOf{i+1}{\bbX}$, 
assuming that $Z$ is of the form $\upclngen{\choice \R}$, represented by $\R = \{R_1, \dots, R_n\}$. 
Observe that a set of symbolic representatives $\R$ 
stands for the intersection of denotations of individual representatives,
that is
\begin{equation}\label{eq:choice_intersection}
\upclngen{\choice{\R}}
= 
\bigcap_{R_j\in\R}\upclngen{\choice \{R_j\}} .
\end{equation}
$Z$ can thus be written as the $\cpre$-image $\cpreof{\Delta_i^\sharp,\tau}{\bigcap \mcS}$ of the intersection 
of the elements of a set $\mcS$ having the form $\upclngen{\choice \{R_j\}}, R_j\in\R$. 
Further, because $\cpre$ distributes over $\cap$, we can
compute the $\cpre$-image of an intersection by computing intersection of the $\cpre$-images, i.e.\ 
\begin{equation}\label{eq:cpre_intersection}
\cpreof{\Delta_i^\sharp,\tau}{\bigcap \mcS}
 = 
\bigcap_{S\in \mcS}\cpreof{\Delta_i^\sharp,\tau}{S}.
\end{equation}
By the definition of $\Delta_i^\sharp$ (where $\Delta_{i}^\sharp = \projNthOf{i+1}{\Delta_i}$),
$\cpreof{\Delta_i^\sharp,\tau}{S}$ 
can be computed using the transition relation $\Delta_i$
for the price of further refining the intersection. In particular,
\begin{equation}\label{eq:projection_intersection}
\cpreof{\Delta_i^\sharp, \tau}{S} 
=
\hspace{-3mm}\bigcap_{\omega\in\projInvOverOf{i+1}{\tau}} \hspace{-3mm}\cpreof{\Delta_i, \omega}{S}.
\end{equation}
Intuitively, 
$\cpreof{\Delta_i^\sharp, \tau}{S}$ contains states from which every transition labelled by \emph{any} symbol that is projected to $\tau$ by $\pi_{i+1}$ has its target in $S$.
Using Equations~\ref{eq:choice_intersection}, \ref{eq:cpre_intersection}, and~\ref{eq:projection_intersection}, 
we can write $\cpreof{\Delta_i^\sharp, \tau}{Z}$ as 
\begin{equation}
\bigcap_{\begin{array}{c} \\[-5.5mm] \scriptstyle S\in\mcS \\[-1mm] \scriptstyle \omega\in\projInvOverOf{i+1}{\tau} \end{array}} \hspace{-3mm}{\cpreof{\Delta_i, \omega}{S}}.
\end{equation}

To compute the individual conjuncts $\cpreof{\Delta_i, \omega}{S}$, 
we take advantage of the fact that every $S$ is in the special form $\upclngen{\choice \{R_j\}}$, 
and that $\Delta_i$ is, by its definition (obtained from determinization via subset construction), \emph{monotone} w.r.t. $\supseteq$. 
That is, if $P\ltr{\omega} P' \in \Delta_i$ for some $P,P'\in Q_i$, 
then for every $R\supseteq P$, there is
$R'\supseteq P'$ s.t.~$R\ltr{\omega} R' \in \Delta_i$.
Due to monotonicity,
the $\cpre\scriptstyle{[\Delta_i, \omega]}$-image of an upward closed set is also upward closed.
Moreover, we observe that
it can be computed symbolically using $\pre$ on elements of its generators. 
Particularly, for a set of singletons ${S} = {\upclngen{\choice \{R_j\}}}$, we get the following equation:
\begin{equation}\label{eq:cpre_pre}
\cpreof
{\Delta_{i}, \omega}{\upclngen{\choice{\{R_j\}}}} = \upclngen{\choice{\left\{\preof{\Delta_{i-1}^\sharp,\omega}{R_j}\right\}}}.
\end{equation}
Intuitively, 
the sets with $\post$-images above a singleton $\{p\} \in \big\{\{p\} \mid p \in R_j\big\} = \upclngen{\choice{\{R_j\}}}$
are those that contain at least one state $q \in Q_{i-1}$ s.t.~$q \ltr{\omega} p \in \Delta_{i-1}^\sharp$.
Using Equation~\ref{eq:cpre_pre}, 
$\cpreof{\Delta_i^\sharp, \tau}{Z}$ can be rewritten as 
\begin{equation}
\bigcap_{\begin{array}{c} \\[-5.5mm] \scriptstyle R\in\R \\[-1mm] \scriptstyle \omega\in\projInvOverOf{i+1}{\tau} \end{array}} \hspace{-3mm} \upclngen{\choice{\left\{\preof{\Delta_{i-1}^\sharp,\omega}{R_j}\right\}}}.
\end{equation}
By applying Equation~\ref{eq:choice_intersection}, we get the final formula for $\cpreofred{\Delta_i^\sharp, \tau}$ shown in the lemma below.
\begin{lemma}\label{lemma:cpre}
$ 
{\cpreof{\Delta_i^\sharp, \tau}{\upclngen{\choice \R}}}
=
{\upclngen{\choice{\left\{\preof{{\Delta_{i-1}^\sharp},\omega}{R_j}\mid{\omega \in \projInvOverOf{i+1}{\tau},R_j \in \R}\right\}}}}.
$
\end{lemma}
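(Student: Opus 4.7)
The plan is to chain the four preceding equations (\ref{eq:choice_intersection})--(\ref{eq:cpre_pre}), which the discussion just above the lemma has already laid out in outline; my job would be to verify that each step is valid and that the indices line up at the end. Starting from $\cpreof{\Delta_i^\sharp,\tau}{\upclngen{\choice\R}}$, I would first rewrite the argument via Equation~\ref{eq:choice_intersection} as $\bigcap_{R_j\in\R}\upclngen{\choice\{R_j\}}$, then pull $\cpre$ inside the intersection using distributivity (Equation~\ref{eq:cpre_intersection}). Since $\Delta_i^\sharp=\projNthOf{i+1}{\Delta_i}$, Equation~\ref{eq:projection_intersection} further refines each conjunct into an intersection over $\omega\in\projInvOverOf{i+1}{\tau}$. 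Applying Equation~\ref{eq:cpre_pre} to each resulting $\cpreof{\Delta_i,\omega}{\upclngen{\choice\{R_j\}}}$ yields $\upclngen{\choice\{\preof{\Delta_{i-1}^\sharp,\omega}{R_j}\}}$. Finally, a second application of Equation~\ref{eq:choice_intersection} in the reverse direction folds the intersection, now indexed by pairs $(\omega,R_j)$ with $\omega\in\projInvOverOf{i+1}{\tau}$ and $R_j\in\R$, into a single $\upclngen{\choice\{\cdot\}}$, producing the right-hand side.

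The only step that requires real thought is Equation~\ref{eq:cpre_pre}, and that is where I would focus the proof. Since $\Delta_i$ arises from the subset construction on $\Delta_{i-1}^\sharp$, each state $T\in Q_i$ has exactly one $\omega$-successor in $\Delta_i$, namely $\postof{\Delta_{i-1}^\sharp,\omega}{T}$. The set $\upclngen{\choice\{R_j\}}$ consists precisely of those subsets of $Q_{i-1}$ that meet $R_j$, so $T\in\cpreof{\Delta_i,\omega}{\upclngen{\choice\{R_j\}}}$ is equivalent to $\postof{\Delta_{i-1}^\sharp,\omega}{T}\cap R_j\neq\emptyset$. Unfolding $\post$ and $\pre$ shows this is in turn equivalent to $T\cap\preof{\Delta_{i-1}^\sharp,\omega}{R_j}\neq\emptyset$, which is exactly $T\in\upclngen{\choice\{\preof{\Delta_{i-1}^\sharp,\omega}{R_j}\}}$.

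The remaining equations admit short justifications I would include: (\ref{eq:choice_intersection}) holds because a subset of $Q_{i-1}$ lies in $\upclngen{\choice\R}$ iff it meets every generator $R_j\in\R$; (\ref{eq:cpre_intersection}) is the standard distributivity of $\cpre$ over intersection; and (\ref{eq:projection_intersection}) follows from the definition of projection, since a $\tau$-transition in $\Delta_i^\sharp$ is precisely a transition in $\Delta_i$ labelled by some $\omega\in\projInvOverOf{i+1}{\tau}$. The main obstacle, modest as it is, will be purely bookkeeping: keeping the combined index set aligned when applying (\ref{eq:choice_intersection}) in reverse, since the generators on the right-hand side are indexed by pairs $(\omega,R_j)$ rather than by elements of $\R$ alone.
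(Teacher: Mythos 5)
Your proposal is correct, and its skeleton is the same as the paper's: the paper obtains Lemma~\ref{lemma:cpre} precisely by chaining Equations~\ref{eq:choice_intersection}, \ref{eq:cpre_intersection}, \ref{eq:projection_intersection}, and~\ref{eq:cpre_pre} in the order you describe, and its appendix supplies detailed proofs only for Equations~\ref{eq:choice_intersection} and~\ref{eq:cpre_pre}, treating \ref{eq:cpre_intersection} and \ref{eq:projection_intersection} as routine, just as you do. Where you genuinely diverge is in the proof of Equation~\ref{eq:cpre_pre}, which you correctly single out as the only step with real content. The paper proves it as a set equality in three stages: it shows that $\cpreof{\Delta_{i},\omega}{\upclngen{\choice{\{R_j\}}}}$ is upward closed, that every singleton of $\choice{\{\preof{\Delta_{i-1}^\sharp,\omega}{R_j}\}}$ lies in it, and that every element of it contains such a singleton. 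Your argument is instead a single chain of equivalences: determinism of $\Delta_i$ (each $T$ has the unique $\omega$-successor $\postof{\Delta_{i-1}^\sharp,\omega}{T}$) plus the observation that $\upclngen{\choice{\{X\}}}$ is exactly the family of sets meeting $X$ reduces membership in the $\cpre$-image to $\postof{\Delta_{i-1}^\sharp,\omega}{T}\cap R_j\neq\emptyset$, i.e.\ to $T\cap\preof{\Delta_{i-1}^\sharp,\omega}{R_j}\neq\emptyset$. This is shorter and more transparent than the paper's two-inclusion argument, and it remains sound in the degenerate case $R_j=\emptyset$, where both sides are empty; the paper's version, in exchange, makes the upward-closedness of the image explicit (a fact the main text also uses when motivating the symbolic representation) rather than leaving it implicit in the characterization. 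Similarly, your one-line justification of Equation~\ref{eq:choice_intersection} (``a set lies in $\upclngen{\choice{\R}}$ iff it meets every generator'') differs from the appendix, which derives it from the pairwise identity $\upclngen{\choice{\bbX}}\cap\upclngen{\choice{\bbY}}=\upclngen{\choice{\left(\bbX\cup\bbY\right)}}$ of Lemma~\ref{lem:choice_intersection_pair}; both routes are valid, and your final bookkeeping over the pair-indexed family $\{(\omega,R_j)\}$ matches the paper's use of Equation~\ref{eq:choice_intersection} in the reverse direction.
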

In order to compute $\fcnfinstex{i}(Z)$, it remains to intersect $\cpreof{\Delta_i^\sharp, \zerosymb}{Z}$, computed using  
Lemma~\ref{lemma:cpre}, with 
$\nfinst{i}$. 
By Equation~\ref{eq:fin_nfin_cl_sets}(\Ni), $\nfinst{i}$ equals $\upclngen{\choice{\{\finstex{i-1}\}}}$,
and, by Equation~\ref{eq:choice_intersection}, the intersection can be done symbolically as
\begin{equation}\label{eq:Ni_step}
{\fcnfinstex{i}(Z)}
= 
{\upclngen{\choice{\left(
\{\finstex{i-1}\}
\cup
\left\{
\preof{{\Delta_{i-1}^\sharp},\omega}{R_j}\mid{\omega \in \projInvOverOf{i+1}{\zerosymb},R_j \in \R}
\right\}
\right)}}}.
\end{equation}
Finally, note that a
symbolic application of $\fcnfinstex{i}$ to $Z = \upclngen{\choice \R}$ represented as the set $\R$
reduces to computing $\pre$-images of the elements of $\R$, which are then put next to each other, together with $\finstex{i-1}$. 
The computation starts from $\nfinst{i} = \upclngen{\choice{\{\finstex{i-1}\}}}$, 
represented by $\{\finstex{i-1}\}$,
and each of its steps, implemented by Equation~\ref{eq:Ni_step}, 
preserves the form of sets $\upclngen{\choice \R}$, represented by $\R$.

\vspace{-0.0mm}
\subsection{Computing $\finstex{i}$ on Representatives of $\downclngen{\R}$-sets}\label{sec:pre_to_cpre}
\vspace{-0.0mm}

Similarly as in the previous section,
computation of $\finstex{i}$ at each even level of the automata hierarchy is
done by computing the least fixpoint of the function
\begin{equation}\label{eq:basic:finstex}
\fcfinstex{i}(Z) = \finst{i} \cup \preof{\Delta_i^\sharp, \zerosymb}{Z}.
\end{equation}
We will show that the whole fixpoint computation from Equation~\ref{eq:fin_nfin_cl_sets}(\Fisharp) can be again carried out symbolically. 
We will explain the following:
(a) All intermediate values of $Z$ are of the form $\downclngen{\R}$, $\R\subseteq Q_i$,
so the sets $\R$ can be used as their symbolic representatives.
(b) $\pre$ and $\cup$ can be computed efficiently on such a~symbolic representation. 
The computation is a simpler analogy of the one in Section~\ref{sec:cpre_to_pre}.

We start with the computation of $\preof{\Delta_i^\sharp, \tau}{Z}$ 
where $\tau\in \projNthOf{i+1}{\bbX}$, 
assuming that $Z$ is of the form $\downclngen{\R}$, represented by $\R = \{R_1, \dots, R_n\}$. 
A simple analogy to Equations~\ref{eq:choice_intersection} and~\ref{eq:cpre_intersection} of Section~\ref{sec:cpre_to_pre} is that the union of downward closed sets is a downward closed set generated by the union of their generators,
i.e.\ $\downclngen\R = \bigcup_{R_j\in\R}\downclgen{R_j}$
and that $\pre$ distributes over union, i.e.\ 
\begin{equation}
\preof{\Delta_i^\sharp,\tau}{\bigcup \R}
 = 
 \bigcup_{R_j\in \R}\preof{\Delta_i^\sharp,\tau}{\downclgen{R_j}}.
\end{equation}
An analogy of Equation~\ref{eq:projection_intersection} holds too: 
%
\begin{equation}\label{eq:projection_union}
\preof{\Delta_i^\sharp, \tau}{S} 
=
\hspace{-3mm}\bigcup_{\omega\in\projInvOverOf{i+1}{\tau}} \hspace{-3mm} \preof{\Delta_i, \omega}{S}.
\end{equation}
Intuitively, 
$\preof{\Delta_i^\sharp, \tau}{S}$ contains states
from which \emph{at least one} transition labelled by \emph{any} symbol that is projected to $\tau$ by $\pi_{i+1}$ leaves with the target in $S$.
Using Equation~\ref{eq:projection_union}, 
we can write $\preof{\Delta_i^\sharp, \tau}{Z}$ as 
\begin{equation}
\bigcup_{\begin{array}{c} \\[-5.5mm] \scriptstyle R_j\in\R \\[-1mm] \scriptstyle \omega\in\projInvOverOf{i+1}{\tau} \end{array}} \hspace{-3mm}{\preof{\Delta_i, \omega}{\downclgen{R_j}}}.
\end{equation}

To compute the individual disjuncts $\preof{\Delta_i, \omega}{\downclgen{R_j}}$, 
we take advantage of the fact that every $\downclgen{R_j}$ is downward closed, 
and that $\Delta_i$ is, by its definition (determinization by subset construction), \emph{monotone} w.r.t. $\subseteq$. 
That is, if $P\ltr{\omega} P' \in \Delta_i$ for some $P,P'\in Q_i$,
then for every $R\subseteq P$, there is
$R'\subseteq P'$ s.t.~$R\ltr{\omega} R' \in \Delta_i$.
Due to monotonicity,
the $\pre\scriptstyle{[\Delta_i, \omega]}$-image of a downward closed set is downward closed.
Moreover, we observe that it can be computed symbolically using $\cpre$ on elements of its generators. 
In particular, for a set $\downclgen{R_j}$, we get the following equation, which is a dual of Equation~\ref{eq:cpre_pre}:
\begin{equation}\label{eq:pre_cpre}
\preof
{\Delta_{i}, \omega}{\downclgen{R_j}} = \downclgen{\cpreof{\Delta_{i-1}^\sharp,\omega}{R_j}}.
\end{equation}
Intuitively, 
the sets with the $\post$-images below $R_j$ 
are those which do not have an outgoing transition leading outside $R_j$. The largest such set is $\cpreof{\Delta_{i-1}^\sharp,\omega}{R_j}$.
Using Equation~\ref{eq:pre_cpre},
$\preof{\Delta_i^\sharp, \tau}{Z}$ can be rewritten as 
\begin{equation}
\bigcup_{\begin{array}{c} \\[-5.5mm] \scriptstyle R_j\in\R \\[-1mm] \scriptstyle \omega\in\projInvOverOf{i+1}{\tau} \end{array}} \hspace{-3mm} \downclgen{\cpreof{\Delta_{i-1}^\sharp,\omega}{R_j}}
\end{equation}
which
gives us the final formula for $\preofred{\Delta_i^\sharp, \tau}$ described in Lemma~\ref{lemma:pre}.
\begin{lemma}\label{lemma:pre}
$ 
{\preof{\Delta_i^\sharp, \tau}{\downclngen \R}}
=
{\downclgen{\cpreof{{\Delta_{i-1}^\sharp},\omega}{R_j}\mid{\omega \in \projInvOverOf{i+1}{\tau},R_j \in \R}}}.
$
\end{lemma}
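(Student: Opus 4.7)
The plan is to derive Lemma~\ref{lemma:pre} by chaining the four identities that are already assembled in the text immediately preceding the statement, and to isolate Equation~\ref{eq:pre_cpre} as the one nontrivial ingredient. Concretely, I will start from the left-hand side $\preof{\Delta_i^\sharp, \tau}{\downclngen{\R}}$, decompose $\downclngen{\R}$ as the union $\bigcup_{R_j \in \R} \downclgen{R_j}$ of the downward closures of its generators, push the $\pre$ operator through the union (since $\pre$ of a union of sets is the union of their $\pre$-images, by definition of $\pre$ on a set), and then apply Equation~\ref{eq:projection_union} to replace the projected transition relation $\Delta_i^\sharp$ by a union over $\omega \in \projInvOverOf{i+1}{\tau}$ of $\preof{\Delta_i, \omega}{\downclgen{R_j}}$. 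At this point the problem is reduced to computing each $\preof{\Delta_i, \omega}{\downclgen{R_j}}$.

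The main obstacle, and really the heart of the lemma, is Equation~\ref{eq:pre_cpre}, which rewrites such a $\pre$-image of a downward closed set in terms of a $\cpre$-image on the generators one level below. I would prove it by unfolding the definition of $\Delta_i$: recall that $\Delta_i$ was constructed from $\Delta_{i-1}^\sharp$ by the subset construction, so its transitions have the form $P \ltr{\omega} \postof{\Delta_{i-1}^\sharp, \omega}{P}$ for $P \in Q_i = 2^{Q_{i-1}}$. Hence $P \in \preof{\Delta_i,\omega}{\downclgen{R_j}}$ iff $\postof{\Delta_{i-1}^\sharp,\omega}{P} \subseteq R_j$, which, by the definition of $\cpre$ applied pointwise to the elements of $P$, is equivalent to $P \subseteq \cpreof{\Delta_{i-1}^\sharp,\omega}{R_j}$, i.e.\ $P \in \downclgen{\cpreof{\Delta_{i-1}^\sharp,\omega}{R_j}}$. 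This simultaneously confirms that the $\pre$-image is downward closed (the monotonicity remark in the text) and gives the closed-form single generator.

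Plugging Equation~\ref{eq:pre_cpre} back into the expression obtained after the first three steps gives
\[
\preof{\Delta_i^\sharp, \tau}{\downclngen{\R}} \;=\; \bigcup_{\substack{R_j \in \R \\ \omega \in \projInvOverOf{i+1}{\tau}}} \downclgen{\cpreof{\Delta_{i-1}^\sharp,\omega}{R_j}}.
\]
To close the argument I use the elementary fact that a union of principal downward closures equals the downward closure of the set of their generators, i.e.\ $\bigcup_k \downclgen{G_k} = \downclngen{\{G_k\}_k}$, instantiated with generators $G_{R_j,\omega} = \cpreof{\Delta_{i-1}^\sharp,\omega}{R_j}$ indexed by $R_j \in \R$ and $\omega \in \projInvOverOf{i+1}{\tau}$. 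This yields exactly the right-hand side of the lemma. The dual lemma for $\cpre$ (Lemma~\ref{lemma:cpre}) is already proved in the text by an analogous chain, so this proof is essentially the symmetric counterpart and should follow the same template; I would organize it as a single computation displayed as a chain of equalities, each step annotated by the identity used, with the verification of Equation~\ref{eq:pre_cpre} factored out as a short preliminary lemma.
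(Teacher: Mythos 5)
Your proposal is correct and takes essentially the same route as the paper: your chain (generator decomposition $\downclngen\R = \bigcup_{R_j\in\R}\downclgen{R_j}$, distributivity of $\pre$ over union, Equation~\ref{eq:projection_union}, Equation~\ref{eq:pre_cpre}, recombination of principal downward closures) is exactly how the main text of Section~\ref{sec:pre_to_cpre} derives Lemma~\ref{lemma:pre}, with Equation~\ref{eq:pre_cpre} factored out and proved separately in the appendix, just as you organize it. Your biconditional proof of Equation~\ref{eq:pre_cpre}---$P \in \preof{\Delta_i,\omega}{\downclgen{R_j}}$ iff $\postof{\Delta_{i-1}^\sharp,\omega}{P} \subseteq R_j$ iff $P \subseteq \cpreof{\Delta_{i-1}^\sharp,\omega}{R_j}$, valid because the subset construction makes $\Delta_i$ deterministic and total---is just a compressed version of the paper's three-step appendix argument (downward-closedness of the $\pre$-image, membership of the generator $\cpreof{\Delta_{i-1}^\sharp,\omega}{R_j}$, and maximality of that generator).
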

To compute $\fcfinstex{i}(Z)$, it remains to unite $\preof{\Delta_i^\sharp, \zerosymb}{Z}$, computed using  
Lemma~\ref{lemma:pre}, with 
$\finst{i}$. 
From Equation~\ref{eq:fin_nfin_cl_sets}(\Fi), $\finst{i}$ equals $\downclgen{\nfinstex{i-1}}$, so
the union can be done symbolically as
\begin{equation}\label{eq:Fi_step}
{\fcfinstex{i}(Z)}
= 
{\downclngen{\left(
\{\nfinstex{i-1}\}
\cup
\left\{
\cpreof{{\Delta_{i-1}^\sharp},\omega}{R_j}\mid{\omega \in \projInvOverOf{i+1}{\zerosymb},R_j \in \R}
\right\}
\right)}}.
\end{equation}
Therefore, 
a~symbolic application of $\fcfinstex{i}$ to $Z = \downclngen{\R}$ represented using the set $\R$
reduces to computing $\cpre$-images of elements of $\R$, which are put next to each other, together with $\nfinstex{i-1}$. 
The computation starts from $\finst{i} = \downclgen{\nfinstex{i-1}}$, 
represented by $\{\nfinstex{i-1}\}$,
and each of its steps, implemented by Equation~\ref{eq:Fi_step}, 
preserves the form of sets $\downclngen{\R}$, represented by $\R$.

\vspace{-0.0mm}
\subsection{Computation of $\finstex{i}$ and $\nfinstex{i}$ on Symbolic Terms}\label{sec:terms}
\vspace{-0.0mm}

\newcommand{\upkryglabc}[0]{\upclngen{\choice{\big\{\{a,b,c\}\big\}}}}
\newcommand{\upkryglabcbccd}[0]{\upclngen{\choice{\big\{\{a,b,c\}, \{b,c\}, \{c,d\}\big\}}}}
\newcommand{\upkryglabcbccdred}[0]{\upclngen{\choice{\big\{\{b,c\}, \{c,d\}\big\}}}}
\newcommand{\downupkryglabcbccd}[0]{\downclngen{\Big\{\!\upkryglabcbccd,\upclngen{\choice{\big\{\{b\},\{d\}\big\}}},\upclngen{\choice{\big\{\{a\},\{c,d\}\big\}}}\!\Big\}}}
\newcommand{\downupkryglabcbccdred}[0]{\downclngen{\Big\{\!\upkryglabcbccdred,\upclngen{\choice{\big\{\{b\},\{d\}\big\}}},\upclngen{\choice{\big\{\{a\},\{c,d\}\big\}}}\!\Big\}}}
\newcommand{\downupkryglabcbccdredred}[0]{\downclngen{\Big\{\!\upclngen{\choice{\big\{\{b\},\{d\}\big\}}},\upclngen{\choice{\big\{\{a\},\{c,d\}\big\}}}\!\Big\}}}

Sections~\ref{sec:cpre_to_pre} and \ref{sec:pre_to_cpre} show how sets of states arising within the fixpoint computations from Equations~\ref{eq:fin_nfin_cl_sets}(\Fisharp{})
and~\ref{eq:fin_nfin_cl_sets}(\Nisharp{}) can be represented symbolically using representatives which are sets of states of the lower level.
The sets of states of the lower level will be again represented symbolically.
When computing the fixpoint of level $i$, we will work with nested symbolic representation of states of depth $i$.
Particularly, sets of states of $Q_k,0\leq k \leq i$, are represented by \emph{terms of level $k$}
where a~term of level $0$ is a subset of $Q_0$,
a~term of level $2j+1$, $j \geq 0$, is of the form $\upclngen{\choice\{{t_1,\ldots,t_n}\}}$ 
where $t_1,\ldots,t_n$ are terms of level $2j$,
and a~term of level $2j$, $j>0$, is of the form $\downclgen{t_1,\ldots,t_n}$ 
where $t_1,\ldots,t_n$ are terms of level $2j-1$. 

The computation of $\cpre$ and $\fcnfinstex{2j+1}$ on a term of level $2j+1$ and computation of $\pre$ and $\fcfinstex{2j}$ on a term of level $2j$
then becomes a recursive procedure that descends via the structure of the terms and produces again a term
of level $2j+1$ or $2j$ respectively.
In the case of $\cpre$ and $\fcnfinstex{2j+1}$ called on a~term of level $2j+1$, 
Lemma~\ref{lemma:cpre} reduces the computation
to a computation of $\pre$ on its sub-terms of level $2j$, 
which is again reduced by Lemma~\ref{lemma:pre} 
to a computation of $\cpre$ on terms of level $2j-1$, and so on until
the bottom level where the algorithm computes $\pre$ on the terms of level $0$ (subsets of $Q_0$).
The case of $\pre$ and $\fcfinstex{2j}$ called on a~term of level $2j$ is symmetrical.

\paragraph{Example.}
We will demonstrate the run of our algorithm on the following abstract example.
Consider a ground WS1S formula $\varphi = \neg\exists\X_3 \neg\exists\X_2
\neg\exists\X_1 : \varphi_0$ and an FA $\A_0 = (Q_0, \Delta_0, I_0 = \{a\},
\finst{0} = \{a,b\})$ that represents $\varphi_0$.
Recall that our method decides validity of $\varphi$ by computing symbolically the sequence of sets
$\finstex{0}, \nfinst{1}, \nfinstex{1}, \finst{2}, \finstex{2}, \nfinst{3}$,
each of them represented using a~symbolic term,
and then checks if $I_3 \cap \nfinst{3} \neq \emptyset$.
In the following paragraph, we will show how such a~sequence is computed and
interleave the description with examples of possible intermediate results.

The fixpoint computation from Equation~\ref{eq:fin_nfin_cl_sets}(\Fisharp) of the first set in the sequence, 
$\finstex{0}$, 
is an explicit computation of the set of states backward-reachable from $\finst{0}$
via $\zerosymb$ transitions of $\Delta_0^\sharp$.
It is done using Equation~\ref{eq:basic:finstex}, yielding, e.g.\ the term
\begin{align*}
\termof{\finstex{0}}   &= \finstex{0} = \{a, b, c\}.
\end{align*}
%
The fixpoint computation of $\nfinstex{1}$ from
Equation~\ref{eq:fin_nfin_cl_sets}(\Nisharp) is done symbolically.
It starts from the set $\nfinst{1}$ represented using Equation~\ref{eq:fin_nfin_cl_sets}(\Ni) as
the term $\termof{\nfinst{1}} = \upclngen{\choice{\big\{\{a,b,c\}\big\}}}$, and each of its
iterations is carried out using Equation~\ref{eq:Ni_step}. 
Equation~\ref{eq:Ni_step} transforms the problem of computing
$\cpreofred{\Delta_1,\omega'}$-image of a term into a~computation of a series of
$\preofred{\Delta_0^\sharp, \omega}$-images of its sub-terms, which is carried
out using Equation~\ref{eq:basic:finstex} in the
same way as when computing $\termof{\finstex{0}}$, ending with, e.g.\ the term
\begin{align*}
\termof{\nfinstex{1}}  &= \upkryglabcbccd.
\end{align*}
%
The term representing $\finst{2}$ is then $\termof{\finst{2}} =
\downclgen{\termof{\nfinstex{1}}}$, due to
Equation~\ref{eq:fin_nfin_cl_sets}(\Fi).
The symbolic fixpoint computation of $\finstex{2}$ from
Equation~\ref{eq:fin_nfin_cl_sets}(\Fisharp) then starts from
$\termof{\finst{2}}$, in our example
\begin{align*}
\termof{\finst{2}  }   &= \downclngen{\Big\{\!\upkryglabcbccd\!\Big\}}.
\end{align*}
Its steps are computed using Equation~\ref{eq:Fi_step}, 
which transforms 
the computation of the image of $\preofred{\Delta_2^\sharp,\omega''}$ into computations of a series of 
$\cpreofred{\Delta_1^\sharp,\omega'}$-images of sub-terms. 
These are in turn transformed by Lemma~\ref{lemma:cpre} into computations of
$\preofred{\Delta_0^\sharp,\omega}$-images of sub-sub-terms, subsets of $Q_0$,
in our example yielding, e.g.\ the term
\begin{align*}
\termof{\finstex{2}}   &= \downupkryglabcbccd.
\end{align*}
%
%
Using Equation~\ref{eq:fin_nfin_cl_sets}(\Nisharp), the final term representing $\nfinst{3}$ is then
\begin{align*}
\termof{\nfinst{3} }   &= \upclngen{\choice{\bigg\{\!\downupkryglabcbccd\!\bigg\}}}.
\end{align*}
In the next section, we will describe how we check whether $I_3 \cap \finst{3}
\neq \emptyset$ using the computed term $\termof{\nfinst{3}}$.
\vspace{-0.0mm}
\subsection{Testing $I_m \cap \finst{m} \stackrel{?}{\neq} \emptyset$ on Symbolic Terms}
\label{sec:testing}
\vspace{-0.0mm}

Due to the special form of the set $I_m$ (every $I_i,1\leq i \leq m$, is the
singleton set $\{I_{i-1}\}$, cf.\ Section~\ref{sec:structure}), 
the test $I_m \cap \finst{m} \neq \emptyset$ can be done efficiently over the 
symbolic terms representing $\finst{m}$. 
Because $I_m = \{I_{m-1}\}$ is a singleton set, testing $I_m \cap F_m \neq
\emptyset$ is equivalent to testing $I_{m-1} \in F_m$.
If $m$ is odd, our approach computes the symbolic representation of $\nfinst{m}$ instead of $\finst{m}$.
Obviously, since $\nfinst{m}$ is the complement of $\finst{m}$, it holds that $I_{m-1} \in \finst{m} \iff I_{m-1} \not\in \nfinst{m}$.
Our way of testing $I_{m-1} \in Y_m$ on a~symbolic representation of the set $Y_m$ of level $m$ is based on the following equations:
%
\begin{align}
\label{eq:membership_down}
                     &&&&&&\{x\} &\in \downclngen{\bbY}        &\iff\quad& \exists Y \in \bbY : x \in Y                    &&&&&&\\
\label{eq:membership_up}
                     &&&&&&\{x\} &\in \upclngen{\choice{\bbY}} &\iff\quad& \forall Y \in \bbY : x \in Y                    &&&&&&\\[1.5mm]
\label{eq:membership_base}
\text{and for $i=0$,}&&&&&&I_0   &\in \upclngen{\choice{\bbY}} &\iff\quad& \forall Y \in \bbY : I_0 \cap Y \neq \emptyset. &&&&&&
\end{align}

Given a symbolic term $\termof X$ of level $m$ representing a set $X\subseteq Q_m$,
testing emptiness of $I_m \cap \finst{m}$ or  $I_m \cap \nfinst{m}$ can be done over $\termof X$ by a recursive 
procedure that descends along the structure of $\termof X$ using Equations~\ref{eq:membership_down} and~\ref{eq:membership_up},
essentially generating an AND-OR tree,
terminating the descent by the use of Equation~\ref{eq:membership_base}.

\paragraph{Example.}
In the example of Section~\ref{sec:terms}, 
we would test whether $\{\{\{\{a\}\}\}\} \cap \nfinst 3 = \emptyset$ over $\termof{\nfinst 3}$.
This is equivalent to testing whether $I_2 = \{\{\{a\}\}\} \in \nfinst{3}$.
From Equation~\ref{eq:membership_up} we get
that
\begin{equation}
I_2 \in \nfinst{3} \iff I_1 = \{\{a\}\} \in \finstex{2}
\end{equation}
because
$\finstex{2}$ is the denotation of the only sub-term
$\termof{\finstex 2}$ of $\termof{\nfinst 3}$. 
Equation~\ref{eq:membership_down} implies that
\begin{equation}
I_1 =\{\{a\}\} \in \finstex{2}
\iff
\{a\} \in \nfinstex{1} \lor
\{a\} \in \upclngen{\choice{\big\{\{b\}, \{d\}\big\}}} \lor
\{a\} \in \upclngen{\choice{\big\{\{a\}, \{c, d\}\big\}}}.
\end{equation}
Each of the disjuncts could then be further reduced by
Equation~\ref{eq:membership_up} into a~conjunction of membership queries on
the base level which would be solved by Equation~\ref{eq:membership_base}.
Since none of the disjuncts is satisfied, we conclude
that $I_1 \not\in \finstex{2}$, so
$I_2 \not\in \nfinst{3}$, implying that $I_2 \in \finst{3}$ and thus obtain the result
$\models \varphi$.


\vspace{-0.0mm}
\subsection{Subsumption of Symbolic Terms}\label{sec:subsumption}
\vspace{-0.0mm}


Although the use of symbolic terms instead of an explicit enumeration of sets of states
itself considerably reduces the searched space, 
an even greater degree
of reduction can be obtained using subsumption inside the symbolic
representatives to reduce their size, similarly as in the antichain algorithms \cite{wulf:antichains}.
For any set of sets $\bbX$ containing a pair of distinct
elements $Y, Z \in \bbX$ s.t.~$Y \subseteq Z$, it holds that
\begin{equation}\label{eq:subsumption}
\downclngen{\bbX} = \downclngen{(\bbX \setminus Y)}
\quad\mbox{and}\quad
\upclngen{\choice{\bbX}} = \upclngen{\choice{(\bbX \setminus Z)}}.
\end{equation}
Therefore, if $\bbX$ is used to represent the set $\downclngen{\bbX}$, the element $Y$ is \emph{subsumed} by $Z$ and can be removed from $\bbX$ without changing its denotation. 
Likewise, if $\bbX$ is used to represent $\upclngen{\choice{\bbX}}$, 
the element $Z$ is \emph{subsumed} by $Y$ and can be removed from $\bbX$ without changing its denotation.
We can thus simplify any symbolic term by pruning out its sub-terms that represent elements subsumed by elements represented by other sub-terms, without changing the denotation of the term.

Computing subsumption on terms can be done 
using the following two equations:
\begin{align}
\label{eq:subs_down}
&&&&&&\downclngen{\bbX}        &\subseteq \downclngen{\bbY}         &&\iff \quad \forall X \in \bbX \exists Y \in \bbY: X \subseteq Y &&&&\\[1mm]
\label{eq:subs_up}
&&&&&&\upclngen{\choice{\bbX}} &\subseteq \upclngen{\choice{\bbY}}  &&\iff \quad \forall Y \in \bbY \exists X \in \bbX: X \subseteq Y.&&&&
\end{align}
Using Equations~\ref{eq:subs_down} and~\ref{eq:subs_up}, testing subsumption of terms of level $i$ reduces to testing subsumption of terms of level $i-1$. 
The procedure for testing subsumption of two terms descends along the structure of the term, 
using Equations~\ref{eq:subs_down} and~\ref{eq:subs_up} on levels greater than $0$, 
and on level $0$, 
where terms are subsets of $Q_0$, 
it tests subsumption by set inclusion.

%

\paragraph{Example.}
In the example from Section~\ref{sec:terms}, we can use the inclusion
$\{b,c\} \subseteq \{a,b,c\}$ and Equation~\ref{eq:subsumption} to reduce
$\termof{\nfinstex{1}} = \upkryglabcbccd$ to the term
\begin{align*}
\termof{\nfinst{1}}'   &= \upkryglabcbccdred.
\end{align*}
Moreover, Equation~\ref{eq:subs_up} implies that 
$\upclngen{\choice{\big\{\{b,c\}, \{c,d\}\big\}}}$ is subsumed by the term
$\upclngen{\choice{\big\{\{b\}, \{d\}\big\}}}$, and, therefore, we can reduce
the term
$\termof{\finstex{2}}$ to the term
\begin{align*}
\termof{\finstex{2}}'   &= \downupkryglabcbccdredred.
\end{align*}
\vspace{-0.0mm}
\section{Experimental Evaluation}\label{sec:experiments}
\vspace{-0.0mm}

\newcommand{\mdMona}[0]{I}
\newcommand{\mdVata}[0]{II}

We implemented a prototype of the presented approach in the tool
\texttt{dWiNA}~\cite{dWiNA} and evaluated it in a benchmark of both practical
and generated examples.
The tool uses the frontend of MONA to parse input formulae and also for the
construction of the base automaton $\A_{\varphi_0}$, and
further uses the MTBDD-based representation of FAs from the 
\texttt{libvata}~\cite{VATA} library.
The tool supports the following two modes of operation.

In mode~\mdMona, we use MONA to generate the deterministic automaton
$\A_{\varphi_0}$ corresponding to the matrix of the formula $\varphi$,
translate it to \texttt{libvata} and run our algorithm for handling the
prefix of $\varphi$ using \texttt{libvata}.
In mode~\mdVata, we first translate the formula $\varphi$ into the formula
$\varphi'$ in prenex normal form (i.e.\ it consists of a quantifier prefix and
a~quantifier-free matrix) where the occurence of negation in the matrix is limited to literals, and then
construct the nondeterministic automaton $\A_{\varphi_0}$ directly using
\texttt{libvata}.


\begin{table}[t]
\begin{center}
\caption{Results for practical examples}\label{tab:real}
\begin{tabular}{| l || r | r || r | r |}
 \hline
 \multicolumn{1}{| c ||}{\multirow{2}{*}{\textbf{Benchmark}}} &
 \multicolumn{2}{| c ||}{\textbf{Time [s]}} & \multicolumn{2}{| c |}{\textbf{Space [states]}}\\
 \cline{2-3}\cline{4-5}
   & \texttt{MONA} & \texttt{dWiNA} &\multicolumn{1}{c|}{~~\texttt{MONA}~~} & \multicolumn{1}{c|}{\texttt{dWiNA}} \\
  \hline
  \hline
  \texttt{reverse-before-loop} & 0.01 & 0.01 & 179 & 47 \\
  \hline
  \texttt{insert-in-loop} & 0.01 & 0.01 & 463 & 110 \\
  \hline
  \texttt{bubblesort-else} & 0.01 & 0.01 & 1\,285 & 271 \\
  \hline
  \texttt{reverse-in-loop} & 0.02 & 0.02 & 1\,311 & 274 \\
  \hline
  \texttt{bubblesort-if-else} & 0.02 & 0.23 & 4\,260 & 1\,040 \\
  \hline
  \texttt{bubblesort-if-if} & 0.12 & 1.14 & 8\,390 & 2\,065 \\
  \hline
\end{tabular}
\end{center}
\end{table}

Our experiments were performed on an Intel Core i7-4770@3.4\,GHz processor with
32\,GiB RAM.
The practical formulae for our experiments that we report on here were obtained
from the shape analysis of~\cite{strand2}
and evaluated using mode~\mdMona{} of our tool;
the results are shown in Table~\ref{tab:real}
(see~\cite{dWiNA} for additional
experimental results).
We measure the time of runs of the tools for processing only the prefix of the
formulae.
We can observe that w.r.t.~the speed, we get comparable results; in some cases
\texttt{dWiNA} is slower than MONA, which we attribute to the fact that our prototype
implementation is, when compared with MONA, quite immature.
Regarding space, we compare the sum of the number of states of all automata
generated by MONA when processing the prefix of $\varphi$ with the number of
symbolic terms generated by \texttt{dWiNA} for processing the same.
We can observe a~significant reduction in the generated state space.
We also tried to run \texttt{dWiNA} on the modified formulae in mode \mdVata{}
but ran into the problem that we were not able to construct the
nondeterministic automaton for the quantifier-free matrix $\varphi_0$ in
reasonable time.
This was because after transformation of $\varphi$ into prenex normal form, if
$\varphi_0$ contains many conjunctions, the sizes of the automata generated
using intersection grow too large (one of the reasons for this is that
\texttt{libvata} in its current version does not support efficient reduction of
automata).

\begin{table}[t]
\begin{center}
 \caption{Results for generated formulae}
 \begin{tabular}{| c || r | r || r | r |}
 \hline
  & \multicolumn{2}{| c
 ||}{\textbf{Time [s]}} & \multicolumn{2}{| c |}{\textbf{Space [states]}}\\
 \cline{2-3}\cline{4-5}
    ~~~~$k$~~~~& \multicolumn{1}{c|}{\texttt{MONA}} & \multicolumn{1}{c||}{\texttt{dWiNA}} & \multicolumn{1}{c|}{\texttt{MONA}} & \multicolumn{1}{c|}{\texttt{dWiNA}} \\
  \hline
  \hline
  2 & 0.20 & 0.01 & 25\,517 & 44 \\
  \hline
  3 & 0.57 & 0.01 & 60\,924 & 50 \\
  \hline
  4 & 1.79 & 0.02 & 145\,765 & 58 \\
  \hline
  5 & 4.98 & 0.02 & 349\,314 & 70 \\
  \hline
  6 & $\infty$ & 0.47 & $\infty$ & 90 \\
  \hline
\end{tabular}
\end{center}
\label{tab:gen}
\end{table}

To better evaluate the scalability of our approach, we created several
parameterized families of WS1S formulae. We start with basic formulae encoding interesting relations
among subsets of $\nat_0$, such as existence of certain transitive relations,
singleton sets, or intervals (their full definition can be found
in~\cite{dWiNA}).
From these we algorithmically create families of formulae with larger quantifier depth, 
regardless of the meaning of the created formulae (though their semantics is still nontrivial). 
In Table~\ref{tab:gen}, we give the results for one of the families where the basic formula expresses 
existence of an ascending chain of $n$ sets ordered
w.r.t.~$\subset$. The parameter $k$ stands for the number of alternations in the prefix of the formulae:
\begin{equation*} 
 \exists Y: \neg\exists X_1\neg\ldots\neg\exists X_k,\ldots,X_{n}:\!\!\!\bigwedge_{1
\leq i < n}\!\!\!\big(X_i \subseteq Y \wedge X_i \subset X_{i+1}\big) \Rightarrow
X_{i+1} \subseteq Y.
\label{eq:acc}
\end{equation*} 
We ran the experiments in mode \mdVata{} of \texttt{dWiNA} (the experiment in
mode \mdMona{} was not successful due to a too costly conversion of a large
base automaton from MONA to \texttt{libvata}).
\vspace{-0.0mm}
\section{Conclusion and Future Work}\label{sec:conclusion}
\vspace{-0.0mm}

We presented a new approach for dealing with alternating quantifications within
the automata-based decision procedure for WS1S.
Our approach is based on a~generalization of the idea of the so-called antichain
algorithm for testing universality or language inclusion of finite automata.
Our approach processes a~prefix of the formula with an arbitrary number of
quantifier alternations on-the-fly using an efficient symbolic representation of
the state space, enhanced with subsumption pruning.
Our experimental results are encouraging (our tool often outperforms MONA) and
show that the direction started in this paper---using modern techniques for
nondeterministic automata in the context of deciding WS1S formulae---is
promising. 

An interesting direction of further development seems to be lifting the
symbolic $\pre$/$\cpre$ operators to a more general notion of terms that allow
working with general sub-formulae (that may include logical connectives and
nested quantifiers).
The algorithm could then be run over arbitrary formulae, without the need of
the transformation into the prenex form.
This would open a way of adopting optimizations used in other tools as well as
syntactical optimizations of the input formula such as anti-prenexing.
Another way of improvement is using simulation-based techniques to reduce the
generated automata as well as to weaken the term-subsumption relation (an
efficient algorithm for computing simulation over BDD-represented automata is
needed).
We also plan to extend the algorithms to WS$k$S and tree-automata, and perhaps
even further to more general inductive structures. 

\smallskip\noindent\emph{Acknowledgement.} The work in this technical report
was supported by the Czech
Science Foundation (projects 14-11384S and 202/13/37876P), the BUT FIT project
FIT-S-14-2486, and the EU/Czech IT4Innovations Centre of Excellence project
CZ.1.05/1.1.00/02.0070.

\bibliographystyle{splncs}
\bibliography{bibliography}

\clearpage
\vfill\pagebreak

\appendix

\vspace{-0.0mm}
\section{Proofs for Section~\ref{sec:dec_proc_ws1s}}\label{app:proofs}
\vspace{-0.0mm}

\begin{lemma}\label{lem:choice_intersection_pair}
Let $\X$ and $\Y$ be sets of sets.
Then it holds that
\begin{equation}
\upclngen{\choice{\bbX}} \cap \upclngen{\choice{\bbY}} = \upclngen{\choice{\left(\bbX \cup \bbY\right)}}.
\end{equation}
\end{lemma}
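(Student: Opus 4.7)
The plan is to reduce the claimed equality to a pointwise characterization of membership in $\upclngen{\choice{\bbD}}$: namely, for any family of sets $\bbD$,
\[
R \in \upclngen{\choice{\bbD}} \iff \forall D \in \bbD : R \cap D \neq \emptyset.
\]
Once this characterization is established, both sides of the lemma unfold into logically equivalent statements and the result drops out by a one-line propositional manipulation.

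First I would prove the characterization. For the forward direction, suppose $R \in \upclngen{\choice{\bbD}}$; then by definition of $\upclngen{}$ there is some $S \in \choice{\bbD}$ with $S \subseteq R$, and by definition of $\choice{}$ this $S$ contains an element $d_D \in D$ for every $D \in \bbD$. Hence $d_D \in R \cap D$, so every $D$ meets $R$. For the converse, assume $R \cap D \neq \emptyset$ for every $D \in \bbD$; then using choice we can pick $d_D \in R \cap D$ for each $D$, and the set $S = \{d_D \mid D \in \bbD\}$ lies in $\choice{\bbD}$ and is contained in $R$, which gives $R \in \upclngen{\choice{\bbD}}$.

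With the characterization in hand, the lemma is immediate: $R \in \upclngen{\choice{\bbX}} \cap \upclngen{\choice{\bbY}}$ iff $R$ meets every member of $\bbX$ and every member of $\bbY$, iff $R$ meets every member of $\bbX \cup \bbY$, iff $R \in \upclngen{\choice{(\bbX \cup \bbY)}}$. I would write this as a short chain of ``iff''s with no further calculation needed.

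There is no real obstacle here; the only thing to watch is the degenerate behavior of $\choice{}$. If $\emptyset \in \bbX$ (or $\bbY$), both sides reduce to $\emptyset$, since the condition $R \cap \emptyset \neq \emptyset$ is never satisfied; and if $\bbX$ or $\bbY$ is itself empty, the corresponding universal quantifier is vacuous and contributes nothing. Both edge cases are consistent with the characterization, so no separate case analysis is required.
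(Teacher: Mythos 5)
Your proof is correct, but it takes a genuinely different route from the paper's. You first establish the pointwise membership characterization $R \in \upclngen{\choice{\bbD}} \iff \forall D \in \bbD : R \cap D \neq \emptyset$---which is precisely the general form of the paper's Equation~\ref{eq:membership_base}, stated there only for $I_0$ and used for the final emptiness test---and then the lemma falls out by regrouping two universal quantifiers. The paper instead argues at the level of generators: it observes that the intersection of the upward-closed sets generated by $\choice{\bbX}$ and $\choice{\bbY}$ is generated by the pairwise unions $X \cup Y$ with $X \in \choice{\bbX}$, $Y \in \choice{\bbY}$, and then identifies this family of unions with $\choice{\left(\bbX \cup \bbY\right)}$. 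Your element-wise argument buys two things. First, uniform handling of the degenerate cases ($\emptyset \in \bbX$, or $\bbX = \emptyset$), which the paper does not discuss. Second, it quietly sidesteps a subtlety in the paper's last step: when $\bbX$ and $\bbY$ share a member, the set of pairwise unions is \emph{not} literally equal to $\choice{\left(\bbX \cup \bbY\right)}$---for $\bbX = \bbY = \big\{\{a,b\}\big\}$ the union of the choices $\{a\}$ and $\{b\}$ is $\{a,b\}$, which is absent from $\choice{\left(\bbX \cup \bbY\right)} = \big\{\{a\},\{b\}\big\}$---and the equality holds only after taking upward closures, the extra generators being subsumed; your characterization never forms these generator sets, so the issue does not arise. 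What the paper's formulation buys in exchange is that it stays in the representative-based language of the symbolic algorithm: it directly exhibits how a generator set for the intersection is computed, which is what Sections~\ref{sec:cpre_to_pre} and~\ref{sec:terms} actually consume. One cosmetic remark: since the paper defines $\choice{}$ only for finite families $\{D_1,\dots,D_n\}$, your appeal to ``using choice'' to pick $d_D \in R \cap D$ amounts to finitely many picks, so no choice principle is actually needed.
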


\begin{proof}
From the definition of the $\choice$ operator, it holds that
\begin{align}
\begin{split}
\upclngen{\choice{\bbX}} &= \upclngen{\big\{\{x_1, \dots, x_n\} \,\big|\, (x_1, \dots, x_n) \in \prod \bbX\big\}} \quad \mbox{and}\\
\upclngen{\choice{\bbY}} &= \upclngen{\big\{\{y_1, \dots, y_m\} \,\big|\, (y_1, \dots, y_m) \in \prod \bbY\big\}}.
\end{split}
\end{align}
Notice that the intersection of a pair of upward closed sets given by their
generators can be constructed by taking all pairs of generators $(X,Y)$, s.t.\ $X$ is from
$\choice{\bbX}$ and $Y$ is from $\choice{\bbY}$, and constructing the set $X \cup Y$.
It is easy to see that $X \cup Y$ is a generator of $\upclngen{\choice{\bbX}}
\cap \upclngen{\choice{\bbY}}$ and that $\upclngen{\choice{\bbX}}
\cap \upclngen{\choice{\bbY}}$ is generated by all such pairs, i.e.\ that
$\upclngen{\choice{\bbX}} \cap \upclngen{\choice{\bbY}}$ is equal to
\begin{equation}
\upclngen{\big\{\{x_1, \dots, x_n\} \cup  \{y_1, \dots, y_m\} \,\big|\, (x_1, \dots, x_n) \in \prod X \land (y_1, \dots, y_m) \in \prod Y\big\}}.
\end{equation}
We observe that this set can be also expressed as
\begin{equation}
\upclngen{\big\{\{x_1, \dots, x_n, y_1, \dots, y_m\} \,\big|\, (x_1, \dots, x_n, y_1, \dots y_m) \in \prod (X \cup Y)\big\}}
\end{equation}
or, to conclude the proof, as $\upclngen{\choice{\left(\bbX \cup \bbY\right)}}$.
\qed
\end{proof}


\begin{lemma}
(Equation~\ref{eq:choice_intersection})
Let $\bbR$ be a set of sets.
Then, it holds that
\begin{equation}
\upclngen{\choice{\bbR}}
= 
\bigcap_{R_j\in\bbR}\upclngen{\choice \{R_j\}}.
\end{equation}
\end{lemma}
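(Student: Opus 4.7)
The plan is to derive the statement as a straightforward generalization of Lemma~\ref{lem:choice_intersection_pair} from binary to $n$-ary intersection, using induction on the cardinality of $\bbR$. Since the preceding lemma already does the conceptual work of showing that intersecting two upward-closed choice-sets corresponds to taking the union of their generator families, the $n$-ary version should follow by iteration.

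First I would handle the base case $|\bbR| = 1$, where $\bbR = \{R_1\}$: both sides equal $\upclngen{\choice{\{R_1\}}}$ trivially. For the inductive step, assume the claim for every generator family of size $n$, and consider $\bbR$ of size $n+1$. Write $\bbR = \bbR' \cup \{R_{n+1}\}$ with $|\bbR'| = n$. Then, pulling the last intersection factor out and applying the induction hypothesis to $\bbR'$,
\begin{equation*}
\bigcap_{R_j \in \bbR}\upclngen{\choice{\{R_j\}}}
= \left(\bigcap_{R_j \in \bbR'}\upclngen{\choice{\{R_j\}}}\right) \cap \upclngen{\choice{\{R_{n+1}\}}}
= \upclngen{\choice{\bbR'}} \cap \upclngen{\choice{\{R_{n+1}\}}}.
\end{equation*}
Applying Lemma~\ref{lem:choice_intersection_pair} to the right-hand side with $\bbX = \bbR'$ and $\bbY = \{R_{n+1}\}$ yields $\upclngen{\choice{(\bbR' \cup \{R_{n+1}\})}} = \upclngen{\choice{\bbR}}$, which closes the induction.

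The only subtle points are the degenerate cases. If $\bbR = \emptyset$, then $\choice{\emptyset} = \{\emptyset\}$ by the usual convention on the empty Cartesian product, so $\upclngen{\choice{\emptyset}} = \powerset{D}$; the empty intersection on the right-hand side is likewise the whole powerset, so both sides agree. Throughout the algorithm of Section~\ref{sec:cpre_to_pre} the generator family $\bbR$ is finite (it is built up step by step from finitely many $\pre$-images), so the induction on $|\bbR|$ suffices and no transfinite argument is needed. I expect essentially no obstacle here beyond being careful to invoke the pair-wise lemma with the right grouping; the real content of the proof is already concentrated in Lemma~\ref{lem:choice_intersection_pair}.
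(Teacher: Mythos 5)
Your proof is correct and takes essentially the same route as the paper: the paper's own proof consists of the remark that the claim ``is a simple consequence of Lemma~\ref{lem:choice_intersection_pair}'' by associativity of $\cap$ and $\cup$, which is precisely the induction on $|\bbR|$ that you spell out. Your treatment of the degenerate cases $|\bbR| \in \{0,1\}$ is a harmless extra bit of care that the paper omits.
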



\begin{proof}
Because intersection and union are both associative operations and
$\bbR = \{R_1, \dots, R_n\}$, this lemma is a simple consequence of
Lemma~\ref{lem:choice_intersection_pair}.
\qed
\end{proof}


\begin{lemma}
(Equation~\ref{eq:cpre_pre})
Let $R_j \subseteq Q_{i-1}$ and $\omega$ be a symbol over $\projNthOf{i}{\bbX}$
for $i > 0$.
Then
\begin{equation}
\cpreof{\Delta_{i}, \omega}{\upclngen{\choice{\{R_j\}}}} = \upclngen{\choice{\left\{\preof{\Delta_{i-1}^\sharp,\omega}{R_j}\right\}}}.
\end{equation}
\end{lemma}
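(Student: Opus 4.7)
The plan is to unfold both sides of the equation into explicit descriptions of the families of subsets of $Q_{i-1}$ that they represent, and to verify that these descriptions coincide. The key observation that makes this almost mechanical is that, by the construction of $\Delta_i$ from $\Delta_{i-1}^\sharp$ via the subset construction, every state $T \in Q_i = 2^{Q_{i-1}}$ has exactly one outgoing transition on $\omega$, namely $T \xrightarrow{\omega} \postof{\Delta_{i-1}^\sharp,\omega}{T} \in \Delta_i$. Thus for any set $S \subseteq Q_i$,
\[
\cpreof{\Delta_i,\omega}{S} \;=\; \{T \in Q_i \mid \postof{\Delta_{i-1}^\sharp,\omega}{T} \in S\}.
\]

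First I would instantiate $S = \upclngen{\choice{\{R_j\}}}$. Since $\choice{\{R_j\}} = \big\{\{p\} \mid p \in R_j\big\}$, taking the upward closure gives
\[
\upclngen{\choice{\{R_j\}}} \;=\; \{U \subseteq Q_{i-1} \mid U \cap R_j \neq \emptyset\}.
\]
Substituting $U = \postof{\Delta_{i-1}^\sharp,\omega}{T}$ into the expression for $\cpreof{\Delta_i,\omega}{S}$ above yields
\[
\cpreof{\Delta_i,\omega}{\upclngen{\choice{\{R_j\}}}} \;=\; \{T \in Q_i \mid \postof{\Delta_{i-1}^\sharp,\omega}{T} \cap R_j \neq \emptyset\}.
\]

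Next I would rewrite the condition $\postof{\Delta_{i-1}^\sharp,\omega}{T} \cap R_j \neq \emptyset$ by pushing the intersection back across the transition relation: there exist $t \in T$ and $r \in R_j$ with $t \xrightarrow{\omega} r \in \Delta_{i-1}^\sharp$ iff $T \cap \preof{\Delta_{i-1}^\sharp,\omega}{R_j} \neq \emptyset$. Hence
\[
\cpreof{\Delta_i,\omega}{\upclngen{\choice{\{R_j\}}}} \;=\; \{T \subseteq Q_{i-1} \mid T \cap \preof{\Delta_{i-1}^\sharp,\omega}{R_j} \neq \emptyset\}.
\]

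Finally, applying the same unfolding of $\upclngen{\choice{\{\cdot\}}}$ as above but now to the single generator $\preof{\Delta_{i-1}^\sharp,\omega}{R_j}$ recognises the right-hand side:
\[
\upclngen{\choice{\{\preof{\Delta_{i-1}^\sharp,\omega}{R_j}\}}} \;=\; \{T \subseteq Q_{i-1} \mid T \cap \preof{\Delta_{i-1}^\sharp,\omega}{R_j} \neq \emptyset\},
\]
and the two sides agree. The only subtle point is to make the determinism of $\Delta_i$ explicit at the very first step, since without it the $\cpre$ would quantify over several possible successors and the equivalence between meeting $R_j$ after a step and meeting $\preof{\Delta_{i-1}^\sharp,\omega}{R_j}$ before a step would have to be stated carefully; everything else is just rewriting along the definitions of $\choice$, $\upclngen{\cdot}$, $\pre$, and $\post$.
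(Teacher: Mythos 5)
Your proof is correct, and it takes a genuinely different route from the paper's. The paper argues extensionally in three steps: it first shows that $\cpreof{\Delta_{i}, \omega}{\upclngen{\choice{\{R_j\}}}}$ is upward closed, then that every generator of the right-hand side (each singleton $\{t\}$ with $t \in \preof{\Delta_{i-1}^\sharp,\omega}{R_j}$) belongs to the left-hand side, and finally that every element of the left-hand side lies above such a generator --- an element-chasing argument in the generator style, mirroring the dual proof of Equation~\ref{eq:pre_cpre}. You instead unfold both sides into one and the same comprehension, $\{T \subseteq Q_{i-1} \mid T \cap \preof{\Delta_{i-1}^\sharp,\omega}{R_j} \neq \emptyset\}$, resting on two facts: that $\Delta_i$ is deterministic and complete on $\omega$ (each $T$ has the unique successor $\postof{\Delta_{i-1}^\sharp,\omega}{T}$, turning the $\subseteq$-condition in the definition of $\cpre$ into a membership condition), and that $\upclngen{\choice{\{R\}}}$ is exactly the family of sets meeting $R$. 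The underlying facts are shared --- the paper also uses $\postof{\Delta_i, \omega}{\{S\}} = \{\postof{\Delta_{i-1}^\sharp, \omega}{S}\}$ in each of its three steps --- but your calculational decomposition is shorter, makes determinism/completeness the explicitly stated enabling hypothesis rather than an implicit one, dispenses with the separate upward-closedness step (it falls out of the common characterization), and handles the degenerate cases $R_j = \emptyset$ and $\preof{\Delta_{i-1}^\sharp,\omega}{R_j} = \emptyset$ uniformly, where completeness (``exactly one'' successor, not ``at most one'') is what makes the left side empty. What the paper's version buys in exchange is constructive content aligned with the antichain machinery: its third step exhibits, for each element $T$ of the $\cpre$-image, a concrete minimal generator $S \subseteq T$, which documents directly that the symbolic representation by generators manipulated in the fixpoint computation is faithful.
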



\begin{proof}
First, we show that the set $\cpreof{\Delta_{i}, \omega}{\upclngen{\choice{\{R_j\}}}}$
is upward closed.
Second, we show that all elements of the set
$\choice{\left\{\preof{\Delta_{i-1}^\sharp,\omega}{R_j}\right\}}$
are contained in
$\cpreof{\Delta_{i}, \omega}{\upclngen{\choice{\{R_j\}}}}$.
Finally, we show that for every element $T$ in the set
$\cpreof{\Delta_{i}, \omega}{\upclngen{\choice{\{R_j\}}}}$ there is a smaller
element $S$ in the set
$\choice{\left\{\preof{\Delta_{i-1}^\sharp,\omega}{R_j}\right\}}$.

\begin{enumerate}
\item  Proving that $\cpreof{\Delta_{i}, \omega}{\upclngen{\choice{\{R_j\}}}}$
  is upward closed:
  Consider a state $S \in Q_i$ s.t.\ $S \in \cpreof{\Delta_{i}, \omega}{\upclngen{\choice{\{R_j\}}}}$.
  From the definition of $\cpre$, it holds that
  \begin{equation}
  \postof{\Delta_i, \omega}{\{S\}} \subseteq \upclngen{\choice{\{R_j\}}},
  \end{equation}
  and from the definition of $\Delta_i$, it holds that
  \begin{equation}
  \postof{\Delta_i, \omega}{\{S\}} = \{\postof{\Delta_{i-1}^\sharp, \omega}{S}\}.
  \end{equation}
  For $T \supseteq S$, it clearly holds that
  \begin{equation}
  \postof{\Delta_{i-1}^\sharp, \omega}{T} \supseteq \postof{\Delta_{i-1}^\sharp, \omega}{S}
  \end{equation}
  and, therefore, it also holds that
  \begin{equation}
  \postof{\Delta_i, \omega}{\{T\}} = \{\postof{\Delta_{i-1}^\sharp, \omega}{T}\} \subseteq \upclngen{\choice{\{R_j\}}}.
  \end{equation}
  Therefore,
  $T \in \cpreof{\Delta_{i}, \omega}{\upclngen{\choice{\{R_j\}}}}$
  and the set $\cpreof{\Delta_{i}, \omega}{\upclngen{\choice{\{R_j\}}}}$ is upward closed.

\item  Proving that for all elements
  $S \in \choice{\left\{\preof{\Delta_{i-1}^\sharp,\omega}{R_j}\right\}}$
  it holds that
  $S \in \cpreof{\Delta_{i}, \omega}{\upclngen{\choice{\{R_j\}}}}$:
  From the properties of $\choice$, it holds that $S = \{s\}$ is a~singleton.
  Because $s \in \preof{\Delta_{i-1}^\sharp,\omega}{R_j}$, there is a transition
  $s \ltr{\omega} r \in \Delta_{i-1}^\sharp$ for some $r \in R_j$.
  Since $\postof{\Delta_{i-1}^\sharp, \omega}{S} \supseteq \{r\}$, it
  follows from the definition of $\Delta_i$ that
  $\postof{\Delta_i, \omega}{\{S\}} = \{T\}$ where $T \supseteq \{r\}$,
  and so $T \in \upclngen{\choice{\{R_j\}}}$ and 
  $\postof{\Delta_i, \omega}{\{S\}} \subseteq \upclngen{\choice{\{R_j\}}}$.
  We use the definition of $\cpre$ to conclude that
  $S \in \cpreof{\Delta_{i}, \omega}{\upclngen{\choice{\{R_j\}}}}$.

\item  Proving that for every
  $T \in \cpreof{\Delta_{i}, \omega}{\upclngen{\choice{\{R_j\}}}}$
  there exists some element
  $S \in \choice{\left\{\preof{\Delta_{i-1}^\sharp,\omega}{R_j}\right\}}$
  such that $S \subseteq T$:
  From $T \in \cpreof{\Delta_{i}, \omega}{\upclngen{\choice{\{R_j\}}}}$ and
  the definition of $\Delta_i$, we have that
  \begin{equation}
  \postof{\Delta_i, \omega}{\{T\}} = \{P\} \subseteq \upclngen{\choice{\{R_j\}}}
  \end{equation}
  for $P$ s.t.~$\postof{\Delta_{i-1}^\sharp, \omega}{T} = P$.
  Since $P \in \upclngen{\choice{\{R_j\}}}$, there exists $r \in R_j \cap P$
  and $t \in T$ s.t.~$t \ltr{\omega} r \in \Delta_{i-1}^\sharp$.
  Because $t \in \preof{\Delta_{i-1}^\sharp, \omega}{\{r\}}$, we choose
  $S = \{t\}$ and we are done. 
\qed
\end{enumerate}
\end{proof}


\begin{lemma}
(Equation \ref{eq:pre_cpre})
Let $R_j \subseteq Q_{i-1}$ and $\omega$ be a symbol over $\projNthOf{i}{\bbX}$
for $i > 0$.
Then
\begin{equation}
\preof
{\Delta_{i}, \omega}{\downclgen{R_j}} = \downclgen{\cpreof{\Delta_{i-1}^\sharp,\omega}{R_j}}.
\end{equation}
\end{lemma}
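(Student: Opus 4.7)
The plan is to follow the same three-step pattern used in the proof of Equation~\ref{eq:cpre_pre} above, exploiting the fact that this is its order-dual. The unfolding is as follows: by the definition of $\Delta_i$ (which arises from the subset construction), every transition has the form $P \ltr{\omega} \postof{\Delta_{i-1}^\sharp, \omega}{P}$, so
$$P \in \preof{\Delta_i, \omega}{\downclgen{R_j}} \iff \postof{\Delta_{i-1}^\sharp, \omega}{P} \subseteq R_j,$$
while $P \in \downclgen{\cpreof{\Delta_{i-1}^\sharp, \omega}{R_j}}$ simply means $P \subseteq \cpreof{\Delta_{i-1}^\sharp, \omega}{R_j}$. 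Establishing the equivalence of these two characterizations is the content of the lemma, and everything ultimately rests on the identity $\postof{\Delta, \omega}{T} = \bigcup_{t \in T} \postof{\Delta, \omega}{\{t\}}$, which is immediate from the definition of $\post$.

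First I would show that $\preof{\Delta_i, \omega}{\downclgen{R_j}}$ is downward closed: if $\postof{\Delta_{i-1}^\sharp, \omega}{S} \subseteq R_j$ and $T \subseteq S$, then monotonicity of $\post$ gives $\postof{\Delta_{i-1}^\sharp, \omega}{T} \subseteq \postof{\Delta_{i-1}^\sharp, \omega}{S} \subseteq R_j$, so $T$ lies in $\preof{\Delta_i, \omega}{\downclgen{R_j}}$. This already implies the right-to-left inclusion of the lemma, provided step two below holds.

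Second, I would check that $\cpreof{\Delta_{i-1}^\sharp, \omega}{R_j}$ is itself a generator, i.e.\ that it belongs to $\preof{\Delta_i, \omega}{\downclgen{R_j}}$. By the definition of $\cpre$, every $p \in \cpreof{\Delta_{i-1}^\sharp, \omega}{R_j}$ satisfies $\postof{\Delta_{i-1}^\sharp, \omega}{\{p\}} \subseteq R_j$; taking the union over all such $p$ and using the identity above yields $\postof{\Delta_{i-1}^\sharp, \omega}{\cpreof{\Delta_{i-1}^\sharp, \omega}{R_j}} \subseteq R_j$, as desired.

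Finally, I would verify that every $T \in \preof{\Delta_i, \omega}{\downclgen{R_j}}$ is contained in this generator: pick any $t \in T$; then $\postof{\Delta_{i-1}^\sharp, \omega}{\{t\}} \subseteq \postof{\Delta_{i-1}^\sharp, \omega}{T} \subseteq R_j$, so $t \in \cpreof{\Delta_{i-1}^\sharp, \omega}{R_j}$, whence $T \subseteq \cpreof{\Delta_{i-1}^\sharp, \omega}{R_j}$ and $T \in \downclgen{\cpreof{\Delta_{i-1}^\sharp, \omega}{R_j}}$. Combining the three steps gives the claimed equality. There is no real obstacle here; the only thing to be careful about is keeping the levels straight, since the transition used on the left-hand side comes from the subset-construction relation $\Delta_i$, while the $\cpre$ and the $\post$ on the right-hand side are taken with respect to the underlying relation $\Delta_{i-1}^\sharp$.
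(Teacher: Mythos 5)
Your proof is correct and follows essentially the same three-step structure as the paper's own proof: downward-closedness of $\preof{\Delta_{i}, \omega}{\downclgen{R_j}}$, membership of the generator $\cpreof{\Delta_{i-1}^\sharp,\omega}{R_j}$, and maximality of that generator. The only cosmetic difference is in the last step, where you argue elementwise via $\postof{\Delta_{i-1}^\sharp,\omega}{\{t\}} \subseteq R_j$ instead of invoking $T \subseteq \cpreof{\Delta_{i-1}^\sharp,\omega}{\postof{\Delta_{i-1}^\sharp,\omega}{T}}$ and monotonicity of $\cpre$ as the paper does.
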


\begin{proof}
First, we show that $\preof{\Delta_{i}, \omega}{\downclgen{{R_j}}}$
is downward closed.
Second, we show that
$S = \cpreof{\Delta_{i-1}^\sharp,\omega}{R_j}$
is in
$\preof{\Delta_{i}, \omega}{\downclgen{R_j}}$.
Finally, we show that every element $T$ in
$\preof{\Delta_{i}, \omega}{\downclgen{R_j}}$ is smaller
than $S$.

\begin{enumerate}
\item  Proving that $\preof{\Delta_{i}, \omega}{\downclgen{R_j}}$
  is downward closed:
  Consider a state $S' \in Q_i$ s.t.\ $S' \in \preof{\Delta_{i}, \omega}{\downclgen{R_j}}$.
  From the definitions of $\pre$ and $\Delta_i$, it holds that
  \begin{equation}
  \postof{\Delta_i, \omega}{\{S'\}} = \{\postof{\Delta_{i-1}^\sharp, \omega}{S'}\} \subseteq \downclgen{R_j},
  \end{equation}
  and, therefore, $\postof{\Delta_{i-1}^\sharp, \omega}{S'} \in \downclgen{R_j}$.
  For $T \subseteq S'$, it clearly holds that
  \begin{equation}
  \postof{\Delta_{i-1}^\sharp, \omega}{T} \subseteq \postof{\Delta_{i-1}^\sharp, \omega}{S'}
  \end{equation}
  and so it also holds that
  \begin{equation}
  \postof{\Delta_i, \omega}{\{T\}} = \{\postof{\Delta_{i-1}^\sharp, \omega}{T}\} \subseteq \downclgen{R_j}.
  \end{equation}
  Therefore,
  $T \in \preof{\Delta_{i}, \omega}{\downclgen{R_j}}$
  and $\preof{\Delta_{i}, \omega}{\downclgen{R_j}}$ is downward closed.

\item  Proving that
  $S = \cpreof{\Delta_{i-1}^\sharp,\omega}{R_j} \in \preof{\Delta_{i}, \omega}{\downclgen{R_j}}$:
  From the definition of $\cpre$, it holds that
  \begin{equation}
  \postof{\Delta_{i-1}^\sharp, \omega}{S}  = S' \subseteq R_j.
  \end{equation}
  Further, from the definition of $\Delta_i$, it holds that
  $S \ltr{\omega} S' \in \Delta_i$ and, therefore, $S \in \preof{\Delta_i, \omega}{\downclgen{R_j}}$.

\item  Proving that for every
  $T \in \preof{\Delta_{i}, \omega}{\downclgen{R_j}}$
  it holds that $T \subseteq S$:
  From $T \in \preof{\Delta_{i}, \omega}{\downclgen{R_j}}$,
  we have that $T \ltr{\omega} P \in \Delta_{i}$
  for $P \subseteq R_j$, and, from the definition of $\Delta_i$, we have that
  $P = \postof{\Delta_{i-1}^\sharp,\omega}{T}$.
  From
  $P = \postof{\Delta_{i-1}^\sharp,\omega}{T}$
  and the definition of $\cpre$, it is easy to see that
  $T \subseteq \cpreof{\Delta_{i-1}^\sharp, \omega}{P}$, and, moreover
  \begin{equation}
  P \subseteq R_j\quad \implies \quad
  \cpreof{\Delta_{i-1}^\sharp, \omega}{P} \subseteq \cpreof{\Delta_{i-1}^\sharp, \omega}{R_j}.
  \end{equation}
  Therefore, we can conclude that $T \subseteq \cpreof{\Delta_{i-1}^\sharp, \omega}{R_j} = S$.
  \qed
\end{enumerate}
\end{proof}

\end{document}